\newtheorem{definition}{Definition}[section]
\newtheorem{remark}{Remark}[section]
\newtheorem{theorem}{Theorem}[section]
\newtheorem{corollary}{Corollary}[section]
\newtheorem{proposition}{Proposition}[section]
\def\be{\begin{equation}}
\def\ee{\end{equation}}
\def\bea{\begin{eqnarray}}
\def\eea{\end{eqnarray}}
\def\be{\begin{equation}}
\def\ee{\end{equation}}
\def\bea{\begin{eqnarray}}
\def\eea{\end{eqnarray}}
\begin{document}

\title{Semi-Symmetric Metric Gravity: from the Friedmann-Schouten geometry with torsion to dynamical dark energy models}
\author{Lehel Csillag}
\email[e-mail:]{lehel@csillag.ro}
\affiliation{Department of Physics, Babes-Bolyai University, Kogalniceanu Street, Cluj Napoca, 400084, Romania}
\affiliation{Ludwig Maximillian University, Theresienstr. 37, 8033 Munich, Germany}
\author{Tiberiu Harko}
\email[e-mail:]{tiberiu.harko@aira.astro.ro}
\affiliation{Department of Physics, Babes-Bolyai University, Kogalniceanu Street, Cluj Napoca, 400084, Romania}
\affiliation{Department of Theoretical Physics, National Institute of Physics
and Nuclear Engineering (IFIN-HH), Bucharest, 077125 Romania,}
\affiliation{Astronomical Observatory, 19 Ciresilor Street, Cluj-Napoca 400487, Romania}

\begin{abstract}
    %Theoretical studies of the recent discovery of the accelerated expansion of the universe trace back to Friedmann, whose equations  are  a main tool in modern cosmology. The standard $\Lambda$CDM paradigm assumes the existence of dark energy and cold dark matter in order to explain observational data. Even after several decades of experimental search, no conclusive evidence has been found for the existence of such dark matter particles. This opens a new window for extended gravitational theories, which give an alternative explanation of the experimental data.

    In the present paper we introduce a geometric generalization of standard general relativity, based on a geometry initially introduced by Friedmann and Schouten in 1924, through the notion of a semi-symmetric connection. The semi-symmetric connection is a particular connection that extends the Levi-Civita one, by allowing for the presence of torsion. While the mathematical landscape of the semi-symmetric metric connections is well-explored, their physical implications remain to be investigated.  After presenting in detail the differential geometric aspects of the geometries with semi-symmetric metric connection, we formulate the Einstein field equations, which contain additional terms induced by the presence of the specific form of torsion we are studying. We consider the cosmological applications of the theory by deriving the generalized Friedmann equations in a flat, homogeneous and isotropic geometry. The Friedmann equations also include some supplementary terms as compared to their general relativistic counterparts, which can be interpreted as a geometric type dark energy. To evaluate the proposed theory, we consider three cosmological models - the first with constant effective density and pressure,  the second with the dark energy satisfying a linear equation of state, and a third one one with the effective quantities satisfying a polytropic equation of state. We also compare the predictions of the semi-symmetric metric gravitational theory with the observational data for the Hubble function, and with the predictions of the standard $\Lambda$CDM model. Our findings indicate that the semi-symmetric metric cosmological models give a good description of the observational data, and for certain values of the model parameters, they can reproduce almost exactly the predictions of the $\Lambda$CDM paradigm. Consequently, Friedmann's initially proposed geometry emerges as a credible alternative to standard general relativity, in which dark energy has a purely geometric origin.
\end{abstract}

\date{\today}
\maketitle

{
  \hypersetup{linkcolor=blue}
  \tableofcontents
}

\newpage
\section{Introduction}
\hypersetup{citecolor=blue}

General relativity is a cornerstone of modern theoretical physics. The theory proposed by Einstein and Hilbert \cite%
{Einstein,Hilbert,Einstein2} describes spacetime as a four-dimensional smooth manifold, equipped with a semi-Riemannian metric $g_{\mu \nu}$. Gravity is interpreted as the described by the Ricci tensor $R_{\mu \nu}$ and Ricci scalar $R$, obtained as contractions of the curvature tensor $\tensor{R}{^\mu_\nu_\rho_\lambda}$ associated to the Levi-Civita connection $\nabla$. The field equations relate the Einstein tensor $G_{\nu \lambda}$, which can be obtained from $R_{\mu \nu}$ and $R$, and the energy-momentum tensor $T_{\nu \lambda}$, which describes ordinary baryonic matter.

The theory based on  semi-Riemannian geometry provides a precise description of  gravitational physics at the level of the Solar System, including effects such as  light deflection, perihelion precession of Mercury and the Shapiro time delay \cite{Will}. Recently, the existence of gravitational waves, which have been predicted by general relativity in the twentieth century, have also been experimentally confirmed \cite{Gravwaves}.

Einstein's geometric theory of gravity not only shaped physicists' understanding of gravity and successfully explained observational data, but also had an immense impact on pure mathematics. Thanks to its geometric foundations, it attracted the attention of mathematicians, including Weyl and Cartan, who attempted to generalize the theory by also extending Riemannian geometry. They quickly realized that the Levi-Civita connection plays a crucial role in Einstein's theory. This connection is unique, and is characterized by metric-compatibility $\nabla_\mu g_{\nu \rho}=0$ and torsion-freeness: $\tensor{T}{^\mu_\nu _\rho}=0$, which corresponds to the fact that the Christoffel symbols $\tensor{\Gamma}{^{\mu}_{\nu \rho}}$ are symmetric. Therefore, with a generalization in mind, these two ingredients could be modified: Weyl considered weakening the metric-compatibility condition \cite{Weyl, Weyl1}, while Cartan  introduced torsion \cite{Cartan1}, which was a new concept back then for differential geometers.

Initially, Weyl's goal was to find a unified field theory, which would describe gravity and electromagnetism. With this in mind, he considered a non-metricity $Q_{\mu \nu \rho}=-\nabla_{\mu} g_{\nu \rho} \neq 0$ of a special form, where the trace part of the non-metricity is the Weyl vector. Physically, the Weyl vector $\omega_{\mu}$ was interpreted to be the source of the electromagnetic field. Although Weyl's theory was able to reproduce the classical predictions of the simpler theory of general relativity (perihelion advance of Mercury, gravitational redshift), Einstein pointed out that the identification would imply that identical atoms which move in such a way as to enclose some electromagnetic flux would change size after the motion. In turn, different sized atoms would have different spectral lines, which has not been experimentally measured. This has lead to the abandonment of the theory by the physics community. Despite the rejection of his theory, Weyl was convinced that his geometry could be useful, perhaps through a reinterpretation of the Weyl vector. This can be seen from his reply to Einstein:

"Your rejection of the theory for me is weighty; [...] But my own brain still keeps believing in it. And as a mathematician I must by all means hold to [the fact] that my geometry is the true geometry 'in the near', that Riemann happened to come to the special case $F_{ik}=0$ is due only to historical reasons (its origin is the theory of surfaces), not to such that matter." \cite{Collectedpapers}

The mathematician \'{E}lie Cartan tried to extend general relativity from a different angle, by introducing the concept of torsion. In this way, he created a theory, which is nowadays called  Einstein-Cartan theory  \cite%
{Cartan2,Cartan3,Cartan4}. Here, the torsion $\tensor{T}{^\mu_\nu _\rho}$ is assumed to be proportional to the spin-density of matter. The Einstein-Cartan theory has been extensively investigated in the physical literature \cite{Hehl, Hehl1,Hehl2,Hehl3}. The Weyl geometry can be easily generalized to include  torsion. The corresponding  geometry is called the Weyl-Cartan geometry, and its mathematical properties and physical applications have been studied in depth \cite{WC1,WC2,WC3,WC4,WC5,WC6,WC7,WC8}.

A third important mathematical, as well as physical, development is related to the work of Weitzenb\"{o}ck \cite{Weitz},  who constructed the theory of  the so-called Weitzenb\"{o}ck spaces. A Weitzenb\"{o}ck space is a manifold with properties $\nabla _{\mu }g_{\sigma \lambda }= 0$, $T^{\mu }_{\sigma \lambda }\neq 0$, and $R^{\mu }_{\nu \sigma \lambda }=0$, where $g_{\sigma \lambda }$, $T^{\mu }_{\sigma \lambda }$ and $R^{\mu }_{\nu \sigma \lambda }$ are the metric, the torsion, and the curvature tensors of the manifold, respectively. If $T^{\mu }_{\sigma \lambda }= 0$, the Weitzenb\"{o}ck manifold becomes a Euclidean manifold. Due to the fact that their Riemann curvature tensor is zero, Weitzenb\"{o}ck spaces have the property of distant parallelism, also called as absolute, or teleparallelism. Einstein first proposed to use Weitzenb\"{o}ck type geometries in a unified teleparallel theory of electromagnetism and gravity \cite{Ein}. In the teleparallel approach one considers, as basic physical variables, the set of tetrad vectors $e^i_{\mu }$. The torsion, obtained from the tetrad fields, can then be used to obtain an alternative description of general relativity,
since the  curvature can be substituted by  torsion. In this way we obtain the teleparallel equivalent of General Relativity (TEGR) \cite{TP1}, also called the $f(T)$ gravity model, in which  torsion exactly compensates curvature, with the space-time becoming flat. TEGR theories are also an active field of research, and they have the potential of explaining the observational problems of modern cosmology \cite{TP2, TP3,TP4,TP5,TP6,TP7,TP8,TP9,TP10,TP11}. For a recent review of $f(T)$ gravity see \cite{TPRev}.

Unified field theories were not appealing only to mathematicians like Weyl, but they also attracted the attention of Schr\"{o}dinger \cite{Schrod} and Eddington \cite{Edd} as well, who considered non-Riemannian connections as a route to unification. For a detailed review of the history of unified field theories, see \cite{unified}. Despite the fact that the unification was not successful, Schr\"{o}dinger \cite{Schrod} found the most general affine connection, which preserves the lengths of vectors. This connection is remarkable in the sense that it overcomes Einstein's objections of Weyl's theory, as the length of vectors is preserved under parallel transport.

Despite being largely set aside in the twentieth century, the majority of the aforementioned extensions have been recently reevaluated, prompted by the theoretical challenges recently encountered by general relativity. Most of these challenges come from  recent observational data, which cannot be explained using standard general relativity without adding extra parameters to the theory.

The first major challenge is the observation of the accelerated expansion of the Universe, which is confirmed by  data of the Planck satellite \cite{Planck}, and studies of Baryon Acoustic Oscillations \cite{Baryon1,Baryon2,Baryon3}. For a detailed review of the cosmic acceleration problem, see \cite{acceleration}.

A closely related, but different problem of modern cosmology is that Einstein's theory does not properly describe the galactic dynamics of massive test particles around the galactic center, which can be seen from the astrophysical observations of the galactic rotation curves \cite{rotationcurves1, rotationcurves2, rotationcurves3}. To explain the observed data, the existence of dark matter is assumed. In most of the theoretical models, it is described as a pressureless cosmic fluid, but its origin is not yet known. There are several particles, which were proposed to be dark matter candidates, such as sterile neutrinos or axions. See \cite{darkmattercandidate1,darkmattercandidate2} for reviews of  dark matter particle candidates.

To address the above mentioned challenges, in modern day cosmology, a simple and widely accepted theory, the $\Lambda$CDM model was formulated. It includes, outside of ordinary matter, a cosmological constant $\Lambda$, which is associated with dark energy, and a postulated cold dark matter. Despite the success of this model, it introduces extra parameters, whose physical origin remains unknown. For instance, despite decades of experimental efforts, there is yet no conclusive evidence supporting the existence of dark matter particles. This has led to the exploration of alternative gravity theories. An intriguing possibility is that the observed phenomena, typically attributed to the presence of dark matter, might actually be a manifestation of the gravitational force. In this perspective, general relativity might break down at large scales, and might have to be replaced by a generalized theory, which potentially explains the accelerated expansion of the Universe, and the astrophysical effects associated with dark matter.

In this context, as mentioned earlier, numerous gravitational theories discarded in the twentieth century have gained renewed interest. Rather than being viewed as unified field theories, they are now considered as alternative theories of gravity, with the potential to explain observational data. This trend, similar to the situation in the twentieth century, has lead to an interplay between mathematics and physics through the study of the cosmological and astrophysical implications of  non-Riemannian geometries.

A prominent example of this reconsideration is  Weyl geometry. Nonmetricity is the basic geometrical/physical quantity in the $f(Q)$ gravity theory,
first proposed in \cite{fQ1}, and later generalized in \cite{fQ2}.  In the $f(Q)$ gravitational theory, the non-metricity $Q$, having its origins in a Weyl type geometric background, is used to describe the properties of the gravitational interaction in a flat geometry, with vanishing curvature. $f(Q)$ gravity was widely used to investigate the cosmological evolution of the Universe, and stellar properties \cite{fQ3,fQ4,fQ5,fQ6,fQ7,fQ8,fQ9,fQ10,fQ11}.

A novel perspective on the physical applications of the Weyl geometry was introduced in the papers \cite{Gh1,Gh2,Gh3,Gh4,Gh5}, where a spin zero, scalar degree of freedom was extracted by linearizing the action with the help of an auxiliary scalar field. This led to the formulation of the so-called Weyl geometric gravity theory.   Its astrophysical implications have been extensively studied in \cite{Harkoweylastro}, where an exact black hole solution that generalizes the Schwarzschild solution has also been found.  Weyl geometric gravity also gives an alternative explanation for the observed data of test particles outside of galaxies, which is usually explained, as we have noted, by assuming the existence of dark matter. In this sense, if dark matter is not found, its effects could be understood by the motion of particles in a Weyl type geometry. For a detailed discussion of this approach, see \cite{Harkodarkmatter, Harkodarkmattera}. The applications of  Weyl geometric gravity  for the description of the dynamical evolution of the Universe were investigated in \cite{Gh6,Gh7}.

In a recent paper \cite{HarkoSchr}, the physical and cosmological implications of a gravitational theory based on a Schr\"{o}dinger connection are thoroughly investigated. In this article, the authors consider a Schr\"{o}dinger connection, a connection that conserves the length of the vectors, and which is implemented by non-metricity only. The corresponding gravitational theory was derived from a gravitational Lagrangian of the form
$L=R+\frac{5}{24}Q_\rho Q^\rho+\frac{1}{6}\tilde{Q}_\rho\tilde{Q}^\rho+2T_\rho Q^\rho
+\zeta^{\rho\sigma}_{~~\alpha}T^\alpha_{~\rho\sigma}$, where $Q_\rho$, $\tilde{Q}_\mu$, and $T_\nu$ are the nonmetricity and the torsion vectors. $\zeta^{\rho\sigma}_{~~\alpha}$ are Lagrange multipliers.

Klemm and Ravera \cite{SilkeKlemm} pointed out that the Schr\"{o}dinger connections could be realized not only by  non-metricity, but also by torsion, or both non-metricity and torsion. This remark has lead us to the consideration of finding Schr\"{o}dinger connections which are constructed with the help of torsion.

It is not widely known, but one year before his death, and shortly after he wrote down the famous Friedmann equations, in $1924$ A. Friedmann, together with Schouten,  introduced a new connection \cite{FriedmannSchouten}, the semi-symmetric connection. This connection  is characterized by its torsion being determined by a one-form. It took almost half a century, until this connection has been studied by mathematicians. In 1970, Kentaro Yano \cite{Kentaroyano} investigated the geometric properties of the semi-symmetric metric connections, which are simply semi-symmetric connections with the condition $\nabla_{\mu} g_{\nu \rho}=0$. On the other hand, the physics literature on semi-symmetric connections is brief. Surprisingly, the physical applications of semi-symmetric connections are highlighted in a mathematical paper \cite{zangiabadi}, which quotes Schouten's book on Ricci calculus \cite{Riccicalculusbook}. A possible application is the description of a specific type of displacement; if a person is moving on the surface of the Earth always facing one definite point, then this displacement is semi-symmetric, and metric. A different physical application of semi-symmetric connections is given in \cite{unificationsemisym}, where an attempt to unify gravity and electromagnetism is made, based on semi-symmetric connections. However, such a unification is possible only in the two dimensional case.

The role of the torsion in cosmology was emphasized in \cite{Kranas}, where a torsion tensor of the form $S_{abc}=2\phi h_{a[b}u_{c]}$, has been adopted, with $h_{ab}$ denoting the projection tensor $h_{ab}=g_{ab}+u_au_b$. The continuity, Friedmann and the Raychaudhuri equations were derived for this specific form of torsion. Moreover, it was shown that torsion alone can induce exponential expansion, also leading to a reduction of the helium-4 production in the early Universe. In the framework of Einstein-Cartan theory, cosmological implications of torsion have also been investigated in \cite{Ben1,Ben2,Luz}.

History demonstrates that attempts to formulate geometric unified field theories have not been successful. Hence, the goal of the present paper is to reconsider semi-symmetric metric connections from a different perspective, with a focus on its viability as a geometric extension of standard general relativity. As a first step to achieve this goal, we write down the Einstein equations
\begin{equation}
    R_{(\mu \nu)}-\frac{1}{2} g_{\mu \nu} R=8 \pi T_{\mu \nu}
\end{equation}
in this geometry, where $R_{\mu \nu}$ and $R$ represent the Ricci tensor and Ricci scalar of the semi-symmetric metric connection. Due to the presence of torsion, we will obtain five new terms in the Einstein equations, which we interpret as a geometric type dark energy, a point which will be justified later.

Having the generalized Einstein equations at hand, for the cosmological applications, we examine the corresponding generalized Friedmann equations within this geometry. The presence of torsion in the connection results in the inclusion of two extra terms in the first Friedmann equation, and three additional terms in the second Friedmann equation. We interpret the supplementary terms in the first Friedmann equation as an effective energy density attributed to dark energy, while the additional terms in the second Friedmann equation represent an effective pressure associated with dark energy. To assess the viability of the theory, we examine three distinct cosmological models. In the first model, we assume that the effective dark energy and dark pressure terms are constants. This model can be fully solved analytically, and leads to a simple theoretical alternative to the $\Lambda$CDM model.   In the second model, we assume that dark energy satisfies a linear equation of state, implying that the effective geometric pressure is proportional to the energy density of the dark energy. The parameter of the equation of state is considered to be a redshift-dependent function. In the third model, we assume that the dark energy satisfies a polytropic equation of state, with polytropic index $n=1$. For models II and III, we conduct a comparative analysis of the predictions derived from semi-symmetric metric cosmology with a small dataset containing observational data for the Hubble function, and with the standard $\Lambda$CDM paradigm, respectively. Our findings indicate that the semi-symmetric metric cosmological models offer a  robust description of observational data for the Hubble function. Additionally, for certain values of the model parameters, these models can closely replicate the predictions of the $\Lambda$CDM model. Hence, semi-symmetric gravity turns out to be a viable alternative to standard general relativity, presenting a scenario where dark energy has a purely geometric origin.

The present paper is organized as follows. We introduce the basics of  semi-symmetric connections in Section~\ref{section2}, and relate them to Murgescu's generalized Weyl connections \cite{murgescu} and to  Schr\"{o}dinger connections. After having seen that semi-symmetric connections can be recast as Schr\"{o}dinger connections, the Einstein equations in this geometry are presented. In Section~\ref{section3}, the cosmological applications of the semi-symmetric metric gravity are investigated in detail. The generalized Friedmann equations are obtained, together with an exact analytical de Sitter solution. To compare the theoretical predictions of the model with observational data, the dimensionless and redshift representation are worked out. Three distinct cosmological models in semi-symmetric metric gravity are taken into consideration in Section \ref{section4}, whose predictions are compared both with observational data for the Hubble function and the standard $\Lambda$CDM model.  A summary of results and an outlook is given in section \ref{section6}. Notably, in the present paper we adopt a different convention, than the one often used for the torsion tensor. Hence, for the sake of completeness, we present the decomposition of the affine connection in our convention in Appendix \ref{appendixA}.
 Appendix \ref{appendixC} is devoted to the algebraic calculations used to derive the Friedmann equations in  semi-symmetric metric geometry. Finally, to highlight the interplay between mathematics and physics, in  Appendix \ref{appendixD} we present a coordinate-free approach to semi-symmetric connections, which is attributed to Kentaro Yano \cite{Kentaroyano}. In addition to the coordinate-free presentation, we explain in detail how one can obtain the formulae in coordinates, which appear in the article, proving that the equations used in our investigations indeed do follow from the coordinate-free abstract mathematical approach.

 \section{Semi-Symmetric Metric Gravity} \label{section2}

In this Section, we present the relation between semi-symmetric metric connections, Schr\"{o}dinger connections, and generalized Weyl connections. Then, as an application of the presented results, we compute the Einstein equations in the semi-symmetric metric geometry. We start our investigations with the introduction of the basic mathematical theory of the semi-symmetric connections.

\subsection{Basic concepts of semi-symmetric geometry}\label{basicconcepts}

In their groundbreaking paper \cite{FriedmannSchouten}, Friedmann and Schouten characterized a semi-symmetric connection as an affine connection, where the torsion tensor satisfies
\begin{equation}\label{semisymmetric}
    \tensor{T}{^\mu_\nu _\rho}=\pi_{\rho} \delta^{\mu}_{\nu} - \pi_{\nu} \delta^{\mu}_{\rho}\; \; \text{for some one-form} \; \; \pi _\rho.
\end{equation}

This directly implies that the torsion tensor, when expressed with lower indices, takes the form
\begin{equation}\label{torsiondown}
    T_{\lambda \nu \rho}=g_{\lambda \mu} \tensor{T}{^\mu _\nu _\rho}=\pi_{\rho} g_{\lambda \nu} - \pi_{\nu} g_{\lambda \rho}.
\end{equation}

Furthermore, when the non-metricity is null, the connection is called \textbf{semi-symmetric metric}. We work in the convention, where the torsion and non-metricity of an affine connection are defined as
\begin{equation}
    \tensor{T}{^\mu _\nu _\rho}:=2 \tensor{\Gamma}{^{\mu}_{[\rho \nu]}},\; \; Q_{\mu \nu \rho}:=-\nabla_{\mu} g_{\nu \rho}.
\end{equation}

We motivate this convention for the torsion tensor as follows. It can be obtained immediately from the coordinate-free definition
\begin{equation}
\begin{aligned}
    T:&\Gamma(T^{*}M) \times \Gamma(TM) \times \Gamma(TM) \to C^{\infty}(M),\\
    &\; \; T(\omega,X,Y)=\omega( \nabla_X Y - \nabla_Y X - [X,Y]).
\end{aligned}
\end{equation}
The components can be obtained from this geometric definition by
\begin{equation}
    \tensor{T}{^\mu _\nu _\rho}=T(dx^\mu,\partial_\nu, \partial_\rho)=dx^{\mu} \left(\nabla_{\partial _\nu} \partial_\rho - \nabla_{\partial_\rho} \partial_\nu -[\partial_\nu,\partial_\rho] \right).
\end{equation}

Using the definition of Christoffel symbols and that partial derivatives commute leads to
\begin{equation}
    \tensor{T}{^\mu _\nu _\rho}=dx^\mu \left(\tensor{\Gamma}{^\beta_{\rho \nu}} \partial_{\beta} - \tensor{\Gamma}{^\beta _\nu _\rho} \partial_{\beta} \right).
\end{equation}
Since the covector field $dx^{\mu}$ is $C^{\infty}(M)$-multilinear
\begin{equation}
     \tensor{T}{^\mu _\nu _\rho}= \tensor{\Gamma}{^\beta_{\rho \nu}} dx^{\mu}(\partial_\beta) -\tensor{\Gamma}{^\beta _\nu _\rho} dx^{\mu} (\partial_\beta).
\end{equation}
As $dx^{\mu}$ is the dual basis of $\partial_{\beta}$, i.e. $dx^\mu(\partial_\beta)=\delta^{\mu}_{\beta}$, we obtain the desired result
\begin{equation}
    \tensor{T}{^\mu _\nu _\rho}=\tensor{\Gamma}{^\mu _\rho _\nu} - \tensor{\Gamma}{^\mu _\nu _\rho}= 2 \tensor{\Gamma}{^\mu _{[\rho \nu]}}.
\end{equation}

It is shown in Appendix~\ref{appendixA}, that in this convention an affine connection can be decomposed using the non-metricity and the torsion as
\begin{equation}\label{generalconnection}
\begin{aligned}
    \tensor{{\Gamma}}{^\mu _\nu _\rho}=\tensor{\gamma}{^\mu _\nu _\rho} &+ \frac{1}{2} g^{\lambda \mu}(-Q_{\lambda \nu \rho}+ Q_{\rho \lambda \nu} + Q_{\nu \rho \lambda})\\
    &- \frac{1}{2}g^{\lambda \mu}(T_{\rho \nu \lambda}+T_{\nu \rho \lambda}- T_{\lambda \rho \nu}),
    \end{aligned}
\end{equation}
where $\tensor{\gamma}{^\lambda _\mu _\nu}$ denotes the Christoffel symbols of the Levi-Civita connection. To obtain the Christoffel symbols of a semi-symmetric metric connection, we set the non-metricity to be zero, and consider the torsion to be given by \eqref{torsiondown}. In this case, Eq.~\eqref{generalconnection} reduces to
\begin{equation}
\begin{aligned}
      \tensor{{\Gamma}}{^\mu _\nu _\rho}= \tensor{\gamma}{^\mu _\nu _\rho} -&\frac{1}{2}g^{\lambda \mu} \left( \pi_{\lambda} g_{\rho \nu} - \pi_{\nu} g_{\rho \lambda} \right)\\
      -&\frac{1}{2} g^{\lambda \mu}( \pi_{\lambda} g_{\nu \rho} - \pi_{\rho} g_{\nu \lambda})\\
      -&\frac{1}{2} g^{\lambda \mu} (\pi_{\rho} g_{\lambda \nu} -\pi_{\nu} g_{\lambda \rho}).
      \end{aligned}
\end{equation}
An immediate simplification of the above equations yields
\begin{equation}
       \tensor{{\Gamma}}{^\mu _\nu _\rho}= \tensor{\gamma}{^\mu _\nu _\rho} - g^{\mu \lambda} \pi_{\lambda} g_{\rho \nu} + g^{\lambda \mu} \pi_{\nu} g_{\rho \lambda},
\end{equation}
or equivalently
\begin{equation}\label{Christoffelsemisymmetric}
     \tensor{{\Gamma}}{^\mu _\nu _\rho}= \tensor{\gamma}{^\mu _\nu _\rho} - \pi^{\mu} g_{\rho \nu} + \pi_{\nu} \delta^{\mu}_{\rho}.
\end{equation}
Our convention for the Riemann curvature tensor is the following
\begin{equation}
\begin{aligned}
    \tensor{Riem}{^\mu _\nu _\rho _\sigma}&=\tensor{\Gamma}{^\lambda _\nu _\sigma} \tensor{\Gamma}{^\mu _\lambda _\rho} - \tensor{\Gamma}{^\lambda _\nu _\rho} \tensor{\Gamma}{^\mu_\lambda _\sigma}\\
   &+ \partial_{\rho} \tensor{\Gamma}{^\mu _\nu _\sigma} - \partial_{\sigma} \tensor{\Gamma}{^\mu _\nu _\rho}
   \end{aligned},
\end{equation}
while the Ricci tensor and scalar are obtained as
\begin{eqnarray}
    R_{\nu \sigma}=\tensor{Riem}{^\mu_\nu_\mu_\sigma}, \; \; R=g^{\nu \sigma} R_{\nu \sigma}, \; \; \text{respectively}.
\end{eqnarray}
In Appendix \ref{appendixD}, using Kentaro Yano's coordinate-free results, we derive a local formula for the Riemann tensor of a semi-symmetric metric connection. Hence, the Riemann tensor, according to Corollary \ref{corollaryriemann}, is expressed in local coordinates as:
\begin{equation}\label{riemanncurvaturesemisym}
  \begin{aligned}
    \tensor{Riem}{^\mu_{\nu \rho \sigma}}=&\overset{\circ}{Riem} \tensor{}{^\mu _\nu _\rho _\sigma}- S_{\sigma \nu} \delta^{\mu}_{\rho}+S_{\rho \nu} \delta^{\mu}_{\sigma}\\
    &- g_{\sigma \nu} S_{\rho \lambda} g^{\lambda \mu}+g_{\rho \nu} S_{\sigma \lambda} g^{\lambda \mu},
\end{aligned}
\end{equation}
where $\overset{\circ}{Riem}\tensor{}{^\mu _\nu _\rho _\sigma}$ denotes the Riemann tensor of the Levi-Civita connection and
\begin{equation}\label{stensorpitensor}
S_{\nu \sigma}=\overset{\circ}{\nabla}_{\nu} \pi_{\sigma}- \pi_{\nu} \pi_{\sigma} + \frac{1}{2} g_{\nu \sigma} \pi_{\lambda} \pi^{\lambda}.\end{equation}
A straightforward computation shows that
\begin{equation}\label{Riccicurvaturesemisym}
    R_{\nu \sigma}=\overset{\circ}{R}_{\nu \sigma} -2 S_{\sigma \nu} -g_{\nu \sigma} S_{\lambda \beta} g^{\lambda \beta}.
\end{equation}
By contracting with $g^{\nu \sigma}$, one obtains the Ricci scalar of the semi-symmetric metric connection
\begin{equation}\label{Ricciscalarsemisym}
    R=\overset{\circ}{R} - 6 S_{\beta \lambda} g^{\beta \lambda}.
\end{equation}
\subsection{From semi-symmetric connections to generalized Weyl connections}

Up until now, we have discussed semi-symmetric connections on Lorentzian manifolds. The idea can be carried over to Weyl manifolds as well, however in that case we cannot implement metric-compatibility, due to the presence of non-metricity in Weyl's theory. In the case of Weyl geometry, we have a non-metricity of the form
\begin{equation}
    Q_{\mu \nu \rho}= -2w_{\mu} g_{\nu \rho}.
\end{equation}

By considering this expression of the  non-metricity, and the torsion tensor to be of the form \eqref{semisymmetric}, we obtain a semi-symmetric Weyl connection. Its Christoffel symbols can be obtained from Eq.~\eqref{generalconnection} by substituting the choices for non-metricity and torsion
\begin{equation}
\begin{aligned}
    \tensor{\Gamma}{^\mu_\nu_\rho}=\tensor{\gamma}{^\mu_\nu_\rho} &+ \frac{1}{2} g^{\lambda \mu} \left(2 w_\lambda g_{\nu \rho} -2 w_\rho g_{\lambda \nu} - 2 w_\nu g_{\rho \lambda} \right)\\
    &- \pi^{\mu} g_{\rho \nu} + \pi_{\nu} \delta^{\mu}_{\rho}.
\end{aligned}
\end{equation}

A straightforward simplification yields
\begin{equation}\label{Weylsemisym}
    \tensor{\Gamma}{^\mu_\nu_\rho}=\underbrace{\tensor{\gamma}{^\mu_\nu_\rho} + w^{\mu} g_{\nu \rho} - w_{\rho} \delta^{\mu}_{\nu} - w_{\nu} \delta^{\mu}_{\rho}}_{:=\overset{W}{\Gamma}\tensor{}{^\mu _\nu _\rho}} - \pi^{\mu} g_{\rho \nu} + \pi_{\nu} \delta^{\mu}_{\rho}.
\end{equation}
By introducing the notation $\overset{W}{\Gamma}\tensor{}{^\mu _\nu _\rho}$ for the Christoffel symbols of the Weyl connection, the Christoffel symbols of the semi-symmetric Weyl connection take the following compact form:
\begin{equation}\label{weylsemisym2}
      \tensor{\Gamma}{^\mu_\nu_\rho}=\overset{W}{\Gamma}\tensor{}{^\mu _\nu _\rho} - \pi^{\mu} g_{\rho \nu} + \pi_{\nu} \delta^{\mu}_{\rho}.
\end{equation}

In this way, we have successfully introduced torsion into Weyl geometry. From a mathematical point of view, work in this direction has already been done by Murgescu \cite{murgescu}, in a more general setting, who defined the notion of a generalized Weyl connection. This connection is characterized by the following Christoffel symbols
\begin{equation} \label{murgescuconnection}
    \tensor{\Gamma}{^\mu _\nu _\rho}=\overset{W}{\Gamma}\tensor{}{^\mu _\nu _\rho} +a_{\nu \rho \lambda} g^{\lambda \mu},
\end{equation}
where the tensor $a_{\nu \rho \lambda}$ is given by
\begin{equation}
    a_{\nu \rho \lambda}=g_{\nu \beta} \tensor{T}{^\beta _\rho _\lambda}+ g_{\beta \rho } \tensor{T}{^\beta _\nu _\lambda} + g_{\beta \lambda} \tensor{T}{^\beta _ \nu _\rho}.
\end{equation}
\"{U}nal and Uysal \cite{unaluysal} noted that by choosing
\begin{equation}
    \tensor{T}{^\mu _\nu _\rho}=\delta^{\mu}_{\nu} a_{\rho} - \delta^{\mu}_{\rho} a_{\nu}
\end{equation}
one obtains a semi-symmetric Weyl connection satisfying \eqref{weylsemisym2} with $\pi_{\rho}=-2a_{\rho}$. In the subsequent section we present the detailed argument, which leads to this result.
\subsection{Schr\"{o}dinger, generalized  Weyl, and semi-symmetric connections}\label{appendixB}

We now consider  the mathematical details on how to relate generalized Weyl, Schr\"{o}dinger, and semi-symmetric connections.

\subsubsection{Weyl and Murgescu connections}

First, we show how one can obtain a semi-symmetric Weyl connection \eqref{weylsemisym2}, starting from Murgescu's connection \eqref{murgescuconnection}. We first choose the torsion tensor, as proposed in \cite{unaluysal}
    \begin{equation}
    \tensor{T}{^\mu _\nu _\rho}=\delta^{\mu}_{\nu} a_{\rho} - \delta^{\mu}_{\rho} a_{\nu}.
\end{equation}

Substituting this form of torsion into \eqref{murgescuconnection} leads to
\begin{equation}
\begin{aligned}
    \tensor{\Gamma}{^\mu _\nu _\rho}=\overset{W}{\Gamma}\tensor{}{^\mu _\nu _\rho}&+ g^{\lambda \mu}  g_{\nu \beta} \left(\delta^{\beta}_{\rho} a_{\lambda}- \delta^{\beta}_{\lambda} a_{\rho} \right) \\
    &+ g^{\lambda \mu} g_{\beta \rho}\left(\delta^{\beta}_{\nu} a_{\lambda} - \delta^{\beta}_{\lambda} a_{\nu} \right)\\
    &+ g^{\lambda \mu} g_{\beta \lambda} \left(\delta^{\beta}_{\nu} a_{\rho} - \delta^{\beta}_{\rho}a_{\nu} \right)
\end{aligned}
\end{equation}

Simplifying the Kronecker deltas gives
\begin{equation}
    \begin{aligned}
         \tensor{\Gamma}{^\mu _\nu _\rho}=\overset{W}{\Gamma}\tensor{}{^\mu _\nu _\rho}&+ g^{\lambda \mu} \left( g_{\nu \rho} a_{\lambda} - g_{\nu \lambda} a_{\rho} \right)\\
         &+g^{\lambda \mu} \left(g_{\nu \rho} a_{\lambda}- g_{\lambda \rho} a_{\nu} \right)\\
         &+ g^{\lambda \mu} \left(g_{\nu \lambda} a_{\rho} - g_{\rho \lambda} a_{\nu} \right)
    \end{aligned}
\end{equation}

One immediately observes that the terms containing $g_{\nu \lambda} a_{\rho}$ vanish, and after regrouping  the other terms it follows that
\begin{equation}
 \tensor{\Gamma}{^\mu _\nu _\rho}=\overset{W}{\Gamma}\tensor{}{^\mu _\nu _\rho}+ 2g^{\lambda \mu} \left( g_{\nu \rho} a_{\lambda} - g_{\lambda \rho} a_{\nu} \right).
\end{equation}

A straightforward simplification yields
\begin{equation}
     \tensor{\Gamma}{^\mu _\nu _\rho}=\overset{W}{\Gamma}\tensor{}{^\mu _\nu _\rho}+ 2 a^{\mu} g_{\nu \rho} - 2 \delta^{\mu}_{\rho} a_{\nu}.
\end{equation}

Upon choosing $\pi^{\mu}:=-2 a^{\mu}$, we arrive at the desired result
\begin{equation}
    \tensor{\Gamma}{^\mu _\nu _\rho}=\overset{W}{\Gamma}\tensor{}{^\mu _\nu _\rho} - \pi^{\mu} g_{\nu \rho} + \delta^{\mu}_{\rho} \pi_{\nu},
\end{equation}
which is in perfect accordance with \eqref{weylsemisym2}. This shows that starting from a generalized Weyl connection \eqref{murgescuconnection}, one can obtain a semi-symmetric Weyl connection \eqref{weylsemisym2}.

\subsubsection{Semi-symmetric and Schr\"{o}dinger connections}

We now turn our attention to the relation between Schr\"{o}dinger and semi-symmetric connections.
A Schr\"{o}dinger connection has the form
\begin{equation}
    \tensor{\Gamma}{^\mu_\nu _\rho}= \tensor{\gamma}{^\mu_\nu _\rho}+g^{\mu \lambda} Z_{\lambda \nu \rho},
\end{equation}
where $Z_{\lambda \nu \rho}$ is a tensor, which satisfies
\begin{equation}\label{conditions}
    Z_{\lambda \nu \rho}=Z_{\lambda \rho \nu}, \; \; Z_{(\lambda \nu \rho)}=0.
\end{equation}

This connection is remarkable in the sense that it overcomes Einstein's objection to Weyl's geometry: the second condition $Z_{(\lambda \nu \rho)}=0$ ensures that vectors preserve their length under parallel transport.

We will now prove that using a torsion of semi-symmetric form, we can define a Schr\"{o}dinger connection as
\begin{equation}\label{schrconn}
    Z_{\lambda \nu \rho}=\frac{1}{2} \left(\pi_{\lambda} g_{\nu \rho} - \pi_{\rho} g_{\nu \lambda} + \pi_{\lambda} g_{\nu \rho} - \pi_{\nu} g_{\rho \lambda} \right).
\end{equation}

To see that $Z_{\lambda \nu \rho}$, defined by  \eqref{schrconn}  is a Schr\"{o}dinger connection, we verify the two conditions. The first condition reads
\begin{equation}
    Z_{\lambda \nu \rho}=Z_{\lambda \rho \nu}.
\end{equation}

As a first step of verifying the equality, we write out the right hand side explicitly:
\begin{equation} \label{schrconn2}
    Z_{\lambda \rho \nu}=\frac{1}{2} \left(\pi_{\lambda} g_{\rho \nu} - \pi_{\nu} g_{\rho \lambda} + \pi_{\lambda} g_{\rho \nu} - \pi_{\rho} g_{\nu \lambda} \right).
\end{equation}

The symmetry of the metric tensor, namely
\begin{equation}
    g_{\nu \rho}=g_{\rho \nu}
\end{equation}
directly implies that the right hand sides of \eqref{schrconn2} and \eqref{schrconn} are equal. Consequently, the first condition is satisfied. We now turn to proving the second condition
\begin{equation}
    Z_{(\lambda \nu \rho)}=0.
\end{equation}

This can be equivalently rewritten as
\begin{equation}
    \frac{1}{6} \left(\underbrace{Z_{\lambda \nu \rho}}_{\text{term }1}+ \underbrace{Z_{\rho \lambda \nu}}_{\text{term } 2} + \underbrace{Z_{\nu \rho \lambda}}_{\text{term } 3} \right)=0.
\end{equation}

We write out term $1$, term $2$ and term $3$ separately
\begin{equation}\label{term 1}
   Z_{\lambda \nu \rho}=\frac{1}{2} \left( \textcolor{red}{\pi_{\lambda} g_{\nu \rho}} - \textcolor{blue}{\pi_{\rho} g_{\nu \lambda}} + \textcolor{orange}{\pi_{\lambda} g_{\nu \rho}} -\textcolor{violet}{\pi_{\nu} g_{\rho \lambda}} \right),
\end{equation}
\begin{equation}\label{term 2}
    Z_{\rho \lambda \nu}=\frac{1}{2} \left(\textcolor{blue}{ \pi_{\rho} g_{\lambda \nu}} - \textcolor{teal}{\pi_{\nu} g_{\lambda \rho}} + \textcolor{RubineRed}{\pi_{\rho} g_{\lambda \nu}}- \textcolor{red}{\pi_{\lambda} g_{\nu \rho}} \right),
\end{equation}
\begin{equation}\label{term 3}
    Z_{\nu \rho \lambda}=\frac{1}{2} \left( \textcolor{violet}{ \pi_{\nu} g_{\rho \lambda}} - \textcolor{orange}{\pi_{\lambda} g_{\rho \nu}} + \textcolor{teal}{\pi_{\nu} g_{\rho \lambda}} - \textcolor{RubineRed}{\pi_{\rho} g_{\lambda \nu}}\right).
\end{equation}

Adding term $1$, term $2$ and term $3$, that is, \eqref{term 1} $+$ \eqref{term 2} $+$ \eqref{term 3}, everything cancels. This is highlighted by the colors. Hence, the desired equality
\begin{equation}
    Z_{(\lambda \nu \rho)}=0
\end{equation}
is obtained, proving that Schr\"{o}dinger connections can be implemented via torsion of semi-symmetric type.

\subsection{A torsional Schr\"{o}dinger connection}

In the following, with the help of the semi-symmetric connections, we will provide an example for a  Schr\"{o}dinger connection, which is implemented by torsion, in contrast to the Schr\"{o}dinger connection implemented by non-metricity in \cite{HarkoSchr}.

To explain the relation between semi-symmetric metric and Schr\"{o}dinger connections, let us suppose that we have a semi-symmetric metric connection, with torsion tensor
\begin{equation}
    \tensor{T}{^\mu _\nu _\rho}=\pi_{\rho} \delta^{\mu}_{\nu} - \pi_{\nu} \delta^{\mu}_{\rho}.
\end{equation}

By Eq.~\eqref{torsiondown}, it follows that
\begin{equation}
    T_{\lambda \nu \rho}=\pi_{\rho} g_{\lambda \nu}-\pi_{\nu} g_{\lambda \rho}.
\end{equation}

We now claim that if we choose
\begin{equation}
    Z_{\lambda \nu \rho}=T_{(\nu \rho) \lambda},
\end{equation}
with our choice of torsion, i.e.
\begin{equation}\label{Schr}
    Z_{\lambda \nu \rho}=\frac{1}{2} \left(\pi_{\lambda} g_{\nu \rho} - \pi_{\rho} g_{\nu \lambda} + \pi_{\lambda} g_{\nu \rho} - \pi_{\nu} g_{\rho \lambda} \right),
\end{equation}
gives a Schr\"{o}dinger connection, which is implemented purely by torsion. This means, that the tensor $Z_{\lambda \nu \rho}$, defined in \eqref{Schr}, satisfies the conditions given in Eq.~\eqref{conditions}. This can be verified  by some straightforward algebra, which can be found in Section IIC. We note that if we would not have assumed that the semi-symmetric connection is also metric, the Schr\"{o}dinger connection could have been realized by both torsion and non-metricity.

The exposition on semi-symmetric connections is now complete. We have demonstrated their close association with both generalized Weyl connections, and Schr\"{o}dinger connections with torsion. In the subsequent Section, our focus shifts to calculating the generalized Einstein equation in this geometry.

\subsection{The generalized Einstein field equations}

To derive the generalized Einstein equations in semi-symmetric metric geometry, we use the formulas for the Ricci curvature and Ricci scalar obtained in Section \ref{basicconcepts}. First of all, {\it we postulate} that the generalized Einstein equation is given by
\begin{equation}
    R_{(\nu \sigma)} - \frac{1}{2} R g_{\nu \sigma}=8 \pi T_{\nu \sigma},
\end{equation}
where $R_{\nu \sigma}$ and $R$ are the Ricci curvature and scalar of the semi-symmetric metric connection, respectively, and $T_{\nu \sigma}$ denotes the matter energy-momentum tensor. Substituting \eqref{Riccicurvaturesemisym} and \eqref{Ricciscalarsemisym} yields
\begin{equation}
\begin{aligned}
    \overset{\circ}{R}_{\nu \sigma} &- 2S_{(\sigma \nu)} - g_{\nu \sigma} S_{\lambda \beta} g^{\lambda \beta}\\
    &- \frac{1}{2} g_{\nu \sigma} \left(\overset{\circ}{R} - 6 S_{\beta \lambda} g^{\beta \lambda} \right)=8 \pi T_{\nu \sigma}.
\end{aligned}
\end{equation}

Upon simplifying and rearranging terms, this can be expressed as
\begin{equation}\label{einstein}
    \overset{\circ}{R}_{\nu \sigma}- \frac{1}{2} g_{\nu \sigma} \overset{\circ}{R}-S_{\sigma \nu} - S_{\nu \sigma} +2 g_{\sigma \nu} S_{\lambda \beta} g^{\lambda \beta}=8\pi T_{\nu \sigma}.
\end{equation}

From the above equation it is readily seen that in the limit when  $S$ becomes zero, we recover the usual Einstein equation.

Eq.~\eqref{stensorpitensor} shows that there is an explicit formula relating $S_{\nu \sigma}$ to the one-form $\pi$ of the semi-symmetric connection.

Using this formula, we will derive the Einstein equations purely in terms of the one-form $\pi$. We start by substituting equation ~\eqref{stensorpitensor} into equation ~\eqref{einstein}
\begin{equation}
\begin{aligned}
    &\overset{\circ}{R}_{\nu \sigma}- \frac{1}{2} g_{\nu \sigma} \overset{\circ}{R}- \left(\overset{\circ}{\nabla}_{\sigma} \pi_{\nu} -\pi_{\sigma} \pi_{\nu} + \frac{1}{2}  g_{\sigma \nu} \pi^{\rho} \pi_{\rho} \right)\\
    &-\left(\overset{\circ}{\nabla}_{\nu} \pi_{\sigma} - \pi_\nu \pi_\sigma +\frac{1}{2} g_{\sigma \nu} \pi^\rho \pi_\rho \right)\\
    &+2 g_{\sigma \nu} \left(\overset{\circ}{\nabla}_{\lambda} \pi_\beta  - \pi_\lambda \pi_\beta + \frac{1}{2} g_{\lambda \beta} \pi^\rho \pi_\rho\right) g^{\lambda \beta}=8 \pi T_{\nu \sigma}.
\end{aligned}
\end{equation}
Collecting the terms and contracting the last bracket with $g^{\lambda \beta}$ yields
\begin{equation}
    \begin{aligned}
       &\overset{\circ}{R}_{\nu \sigma}- \frac{1}{2} g_{\nu \sigma} \overset{\circ}{R}-\overset{\circ}{\nabla}_{\sigma} \pi_{\nu} - \overset{\circ}{\nabla}_{\nu} \pi_{\sigma} +2 \pi_{\sigma} \pi_{\nu}- g_{\sigma \nu} \pi^{\rho} \pi_\rho\\
       &+2 g_{\sigma \nu} \left(\overset{\circ}{\nabla}_{\lambda} \pi^{\lambda} - \pi_{\lambda} \pi^{\lambda} +2 \pi^{\rho} \pi_{\rho} \right)=8 \pi T_{\nu \sigma}.
    \end{aligned}
\end{equation}
Hence, our field equations take the final form
\begin{equation}\label{Einsteinsemisymmetricequation}
\begin{aligned}
    \overset{\circ}{R}_{\nu \sigma}&-\frac{1}{2} g_{\nu \sigma} \overset{\circ}{R} - \overset{\circ}{\nabla}_{\sigma} \pi_{\nu} - \overset{\circ}{\nabla}_{\nu} \pi_{\sigma}\\
    &+ 2 \pi_{\sigma} \pi_{\nu}+ 2 g_{\sigma \nu} \overset{\circ}{\nabla}_{\lambda} \pi^{\lambda} + g_{\nu \sigma} \pi^{\rho}\pi_{\rho}=8\pi T_{\nu \sigma}.
\end{aligned}
\end{equation}
This derivation shows, that in addition to the usual Einstein equation, there are five terms, all of which contain the one-form $\pi$ of the semi-symmetric connection. We interpret these terms as a geometric-type dark energy. We will justify this interpretation by considering cosmological applications of the theory. As a first step towards the justification, we derive the generalized Friedmann equations from Eq.~\eqref{Einsteinsemisymmetricequation}, and consider three cosmological models. By comparing the predictions of the linear and polytropic models  with the standard $\Lambda$CDM paradigm, it will be shown that the additional terms arising from the one-form $\pi$ could be an alternative explanation for the observational data, which are usually attributed to the presence of the dark energy.

\section{Cosmological applications}\label{section3}

This Section is devoted to the detailed study of the viability of the semi-symmetric metric gravity as an alternative theory of gravity, by comparing its cosmological predictions to the standard $\Lambda$CDM paradigm, and to the observational data. We begin this exploration by deriving the Friedmann equations in this theory.

\subsection{The Friedmann equations}\label{Friedmannsection}

The starting point to derive the Friedmann equations is the  generalized Einstein equation \eqref{Einsteinsemisymmetricequation}. We consider an isotropic, homogeneous and spatially flat FLRW metric, which is described by the metric
\begin{equation}
    ds^2=-dt^2+a^2(t) \delta_{ij} dx^i dx^j,
\end{equation}
where the indices $i,j$ are spatial indices, i.e. they take the values $1,2,3$. Furthermore, for matter we consider a perfect fluid, characterized by two thermodynamic parameters, the pressure $p$ and the energy density $\rho$. Its energy-momentum tensor reads
\begin{equation}
    T_{\nu \sigma}=\rho u_{\nu} u_{\sigma} + p(u_\nu u_\sigma + g_{\nu \sigma}).
\end{equation}
We look at the problem in a comoving coordinate system, in which the four-velocity takes the form
\begin{equation}
    u_\nu=(-1,0,0,0) \iff u^{\nu}=(1,0,0,0).
\end{equation}

Moreover, we consider the ansatz
\begin{equation}
    \pi_{\nu}=(-\omega(t),0,0,0) \iff \pi^{\nu}=(\omega(t),0,0,0).
\end{equation}

From now on, we do not write out explicitly the time dependence for $a(t)$. Given these assumptions, some straightforward algebra detailed in Appendix \ref{appendixC} leads to the following Friedmann equations
\begin{eqnarray}  \label{F1}
\hspace{-0.5cm}3H^{2}=8\pi \rho -3 \omega^2 +6H \omega=8\pi \left( \rho
+\rho _{eff}\right) =8\pi \rho _{tot},
\end{eqnarray}
\begin{eqnarray}  \label{F2}
2\dot{H}+3H^{2}&=&-8\pi p+4H\omega -\omega ^{2}+2\dot{\omega}=-8\pi \left(
p+p_{eff}\right)  \notag \\
&=&-8\pi p_{tot},
\end{eqnarray}
where we have denoted $H=\dot{a}/a$, and
\begin{equation}
\rho _{eff}=\frac{1}{8 \pi} \left( 6H \omega -3 \omega^2 \right) ,
\end{equation}%
and
\begin{equation}
p_{eff}=-\frac{1}{8\pi }\left( 4H\omega -\omega ^{2}+2\dot{\omega}\right) ,
\end{equation}%
respectively, while $\rho _{tot}=\rho +\rho _{eff}$, and $p_{tot}=p+p_{eff}$.

By multiplying the first Friedmann equation with $a^{3}$, by taking the
derivative of the resulting equation with respect to the time, and with
the use of the second Friedmann equation, we obtain the energy conservation
equation
\begin{equation}
\dot{\rho}_{tot}+3H\left( \rho _{tot}+p_{tot}\right) =0,
\end{equation}%
which can be rewritten in an equivalent form as
\begin{equation}
\dot{\rho}+3H\left( \rho +p\right) +\dot{\rho}_{eff}+3H\left( \rho
_{eff}+p_{eff}\right) =0,
\end{equation}%
or
\bea\label{eqcons}
\hspace{-0.5cm}\dot{\rho}&+&3H\left( \rho +p\right) \nonumber\\
\hspace{-0.5cm}&+&\frac{3}{8\pi }\left[ \frac{d}{dt}\left(
2H\omega -\omega ^{2}\right) +H\left( 2H\omega -2\omega ^{2}-2\dot{\omega}\right) %
\right] =0. \nonumber\\
\eea

Eq.~(\ref{F1}) can be reformulated as $3\left(H-\omega\right)^2=8\pi \rho$, giving
\be\label{66}
H=\omega \pm \sqrt{\frac{8\pi}{3}}\sqrt{\rho}.
\ee
and
\be
\dot{H}=\dot{\omega}\pm \sqrt{\frac{2\pi}{3}}\frac{\dot{\rho}}{\sqrt{\rho}}.
\ee
Hence, the second Friedmann equation gives for the evolution of the density the equation
\be\label{67}
\pm \dot{\rho}\pm 2H\rho+\sqrt{\frac{8\pi}{3}}\left(\rho+3p\right)\sqrt{\rho}=0.
\ee

As an indicator of the accelerating/decelerating nature of the cosmological
expansion we use the deceleration parameter, defined as
\begin{equation}
q=\frac{d}{dt}\frac{1}{H}-1=-\frac{\dot{H}}{H^{2}}-1.
\end{equation}

With the use of the generalized Friedmann equations we obtain
\begin{equation}
q=\frac{1}{2}+\frac{3}{2}\frac{p_{tot}}{\rho _{tot}}=\frac{1}{2}+\frac{3}{2}%
\frac{8\pi p-\left( 4H\omega -\omega ^{2}+2\dot{\omega}\right) }{8\pi \rho
+3\left( 2H\omega -\omega ^{2}\right) }.
\end{equation}

\subsection{de Sitter type evolutions}

We investigate now the de Sitter type solutions of the metric semi-symmetric gravity theory. We consider the possibilities of both vacuum and non-vacuum evolutions.

\subsubsection{Vacuum solutions}

In the vacuum,  with $\rho =p=0$, the first Friedmann equation (\ref{66}) immediately gives
\begin{equation}
    H(t)=\omega(t),
\end{equation}
while the second Friedmann equation (\ref{67}) is automatically satisfied. Hence, the de Sitter type solution with $H=H_0={\rm constant}$ is obtained
for $\omega(t)=\omega_0=\text{constant}$, that is, a vacuum Universe in the presence of a constant semi-symmetric constant torsion tensor will expand exponentially according to
\begin{equation}
    \frac{\dot a}{a}=\omega_0 \implies a(t)=a_0 e^{\omega_0 t}.
\end{equation}

\subsubsection{de Sitter expansion in the presence of matter}

In the presence of dust matter with $p=0$, a de Sitter type expansion with $H=H_0={\rm constant}$ is obtained if the matter density satisfies the differential equations
 \be\label{67}
\pm \dot{\rho}\pm 2H_0\rho+\sqrt{\frac{8\pi}{3}}\rho ^{3/2}=0,
\ee
with the general solution given by
\be
\rho (t)=\frac{4 H_0^4}{\left(\mp e^{H_0(t-C_0)}+2 \sqrt{\frac{2\pi}{3}}H_0 \right)^2}
\ee
where $C_0$ is an arbitrary integration constant. Hence, during the de Sitter type evolution, the ordinary matter density decreases exponentially, and tends to zero in the large time limit, $\lim_{t\rightarrow \infty }\rho (t)=0$.  As for the torsion component $\omega$, its behavior is obtained as
\be
\omega (t)=H_0\mp \sqrt{\frac{8\pi}{3}}\frac{2 H_0^2}{\left(\mp e^{H_0(t-C_0)}+2 \sqrt{\frac{2\pi}{3}}H_0 \right)}.
\ee

In the limit of large times $\omega (t)$ tends to $H_0$, thus becoming a constant, and hence there is a smooth transition between the matter and the vacuum de Sitter type expansions.

\subsection{The dimensionless representation}

In order to simplify the mathematical formalism, we introduce a set of
dimensionless variables $\left( h,\tau ,\Omega ,r,P\right) $, defined
according to
\begin{equation}
H=H_{0}h,\tau =H_{0}t,\omega =H_{0}\Omega ,\rho =\frac{3H_{0}^{2}}{8\pi }r,p=%
\frac{3H_{0}^{2}}{8\pi }P.
\end{equation}

Then the generalized Friedmann equations and the energy balance equation of the semi-symmetric metric gravity theory take the
dimensionless form
\begin{equation}
h^{2}=r- \Omega^2+ 2 h \Omega,
\end{equation}
\begin{equation}
2\frac{dh}{d\tau }+3h^{2}=-3P+4h\Omega -\Omega ^{2}+2\frac{d\Omega }{d\tau }.
\end{equation}%
The effective energy density and pressure become
\begin{equation}
r_{eff}=2h\Omega -\Omega ^{2},
\end{equation}

\begin{equation}
P_{eff}=-\frac{1}{3}\left( 4h\Omega -\Omega ^{2}+2\frac{d\Omega }{d\tau }%
\right) ,
\end{equation}
where $\rho _{eff}=\left( 3H_{0}^{2}/8\pi \right) r_{eff}$, and $%
p_{eff}=\left( 3H_{0}^{2}/8\pi \right) P_{eff}$.

\subsection{The redshift representation}

To directly compare the theoretical predictions of the model with the
observational data we introduce, instead of the time variable, the redshift $%
z$, defined according to $1+z=1/a$. Then we obtain
\begin{equation}
\frac{d}{d\tau }=-(1+z)h(z)\frac{d}{dz}.
\end{equation}

Thus, in the redshift representation, the Friedmann equations are given by
\begin{equation}
h^{2}(z)=r(z)+2h(z)\Omega (z)-\Omega ^{2}(z),  \label{Fz1}
\end{equation}%
\begin{eqnarray}  \label{Fz2}
&&-2(1+z)h(z)\frac{dh(z)}{dz}+3h^{2}(z)=-3P(z)+4h(z)\Omega (z)  \notag \\
&&-\Omega ^{2}(z)-2(1+z)h(z)\frac{d\Omega }{dz}.
\end{eqnarray}%

To test the viability of the present cosmological model, we perform a
detailed comparison with the standard $\Lambda $CDM model, as well as with a
small sample of observational data points, obtained for the Hubble function.

In the $\Lambda $CDM model the Hubble function is obtained as
\begin{equation}
H=H_{0}\sqrt{\frac{\Omega _{m}}{a^{3}}+\Omega _{\Lambda }}=H_{0}\sqrt{\Omega
_{m}(1+z)^{3}+\Omega _{\Lambda }},
\end{equation}%
where $\Omega _{m}=\Omega _{b}+\Omega _{DM}$, with $\Omega _{b}=\rho
_{b}/\rho _{cr}$, $\Omega _{DM}=\rho _{DM}/\rho _{cr} $ and $\Omega
_{\Lambda }=\Lambda /\rho _{cr}$, where $\rho_{cr}$ is the critical density
of the Universe. $\Omega _{b}$, $\Omega _{DM}$ and $\Omega _{DE}$ represent
the density parameters of the baryonic matter, dark matter, and dark energy,
respectively. The deceleration parameter can be obtained from the relation
\begin{equation}
q(z)=\frac{3(1+z)^{3}\Omega _{m}}{2\left[ \Omega _{\Lambda }+(1+z)^{3}\Omega
_{m}\right] }-1.
\end{equation}

In the following analysis for the matter and dark energy density parameters
of the $\Lambda $CDM model we will use the numerical values $\Omega
_{DM}=0.2589$, $\Omega _{b}=0.0486$, and $\Omega _{\Lambda }=0.6911$,
respectively \cite{1g}. Hence, the total matter density parameter $%
\Omega _{m}=\Omega _{DM}+\Omega _{b}=0.3075$, where we have neglected the
contribution of the radiation to the total matter energy balance in the late
Universe. The present day value of $q$, as predicted by the $\Lambda$CDM
model, is thus $q(0)=-0.5912$, indicating that the recent Universe is in an
accelerating expansionary stage. For the observational data we use the
values of the Hubble functions from the compilation presented in \cite{Bou}.

To compare the present model with the $\Lambda$CDM paradigm we will also use
the $Om(z)$ diagnostic \cite{Sahni}, representing an important method for
the differentiation of the alternative cosmological models. The $Om(z)$
function is defined as
\begin{equation}
Om (z)=\frac{H^2(z)/H_0^2-1}{(1+z)^3-1}=\frac{h^2(z)-1}{(1+z)^3-1}.
\end{equation}

In the case of the $\Lambda$CDM model, $Om(z)$ is a constant, and it is
equal to the present day matter density $r(0)=0.3075$. For cosmological
models satisfying an equation of state with a constant equation of state
parameter $w = \mathrm{constant}$, the existence of a positive slope of $%
Om(z)$ is evidence for a phantom-like evolution, while a negative slope
indicates a quintessence-like dynamics.

\section{Specific cosmological models}\label{section4}
In the present formalism there is no equation of motion for $\Omega $.
Hence, in order to close the cosmological field equations, we need to impose
a specific relation between the parameters of the model. We will consider three such models, in which some particular relations (equations of state) are imposed on the effective energy density and pressure, generated by the torsion field. In the first model we assume that both these quantities are constants. In the second cosmological model we assume the existence of a linear, redshift dependent relation between $\rho _{eff}$ and $p_{eff}$. Finally, we consider a model in which the effective pressure is related to the effective density by a polytropic relation, with polytropic index $n=1$.

\subsection{Model I: \;$\rho_{eff}=\lambda$, $p_{eff}=2K \lambda/3$}

As a first cosmological model in metric semi-symmetric gravity theory we consider the case in which both the effective density and pressure are constants, with the constants taking different values. Hence, we impose the conditions
\begin{equation}
    \frac{1}{8\pi} \left( 6H \omega - 3 \omega^2 \right)=\lambda,
\end{equation}
\begin{equation}
 - \frac{1}{8\pi} \left( 4 H \omega - \omega^2 +2 \dot \omega \right)=\frac{2}{3} K \lambda.
\end{equation}
To get rid of signs and factors, we redefine $\Lambda=8\pi \lambda, k=-K$ to obtain
\be
3\omega \left(2H-\omega\right)=\Lambda,
\ee
and
\be
4H\omega -\omega ^2+2\dot{\omega}=\frac{2}{3}k\Lambda,
\ee
respectively, where $k$ and $\Lambda \geq 0$ are constants. By eliminating $H$ between the above two equations, we obtain for $\omega$ the first order differential equation
\be
2\dot{\omega}+\omega ^2+2(1-k)\frac{\Lambda}{3}=0,
\ee
with the general solution given by
\be
\omega (t)=\sqrt{\frac{2(k-1)\Lambda}{3}}\tanh\left[\frac{\sqrt{(k-1)\Lambda}}{\sqrt{6}}\left(t-t_0\right)\right],
\ee
where $t_0$ is an arbitrary constant of integration. For this expression of $\omega$, we obtain the Hubble function as
\begin{eqnarray}
\hspace{-0.5cm}H(t) &=&\frac{\sqrt{\Lambda }}{2\sqrt{6(k-1)}}\tanh \left[ \frac{\sqrt{%
(k-1)\Lambda }\left( t-t_{0}\right) }{\sqrt{6}}\right]  \nonumber\\
\hspace{-0.5cm}&&\times \left\{ \coth ^{2}\left[ \frac{\sqrt{(k-1)\Lambda }\left(
t-t_{0}\right) }{\sqrt{6}}\right] +2(k-1)\right\} .\nonumber\\
\end{eqnarray}

As for the matter density, $8\pi \rho =3H^2-\Lambda$, it is given by
\bea
\hspace{-0.5cm}8\pi \rho (t)&=&\frac{\Lambda }{8(k-1)}\Bigg\{ \coth \left[ \sqrt{\frac{%
(k-1)\Lambda }{6}}\left( t-t_{0}\right) \right]\nonumber\\
\hspace{-0.5cm} &&-2(k-1)\tanh \left[ \sqrt{%
\frac{(k-1)\Lambda }{6}}\left( t-t_{0}\right) \right] \Bigg\}^{2}.
\eea

 For the matter pressure $8\pi p(t)=2k\Lambda/3 -2\dot{H}-3H^2$ we obtain the expression
\bea
8\pi p(t)&=&\frac{\Lambda }{24(k-1)}\Bigg\{ (4k-7)\text{csch}^{2}\left[ \sqrt{%
\frac{(k-1)\Lambda }{6}}\left( t-t_{0}\right) \right] \nonumber\\
&&+4(k-1)^{2}\text{sech}%
^{2}\left[ \sqrt{\frac{(k-1)\Lambda }{6}}\left( t-t_{0}\right) \right]\nonumber\\
&&+4k^{2}-4k-3\Bigg\}.
\eea

The scale factor of this cosmological model is given by
\bea
a(t)&=&a_{0}\sinh ^{2(k-1)}\left[ \sqrt{\frac{(k-1)\Lambda }{6}}\left(
t-t_{0}\right) \right] \nonumber\\
&&\times \cosh ^{\sqrt{6}}\left[ \sqrt{\frac{(k-1)\Lambda }{6}}%
\left( t-t_{0}\right) \right] .
\eea

In the limit of large times we obtain for the matter density the expression
\be
\lim_{t\rightarrow \infty} \rho(t)=\frac{\left(3-2k\right)^2}{8(k-1)}\Lambda,
\ee
while in the same limit the pressure becomes
\be
\lim_{t\rightarrow \infty} p(t)=\frac{4k^2-4k-3}{24(k-1)}\Lambda.
\ee
For $k=3/2$ we have $\lim_{t\rightarrow \infty} p(t)=0$, and $\lim_{t\rightarrow \infty} \rho(t)=0$, indicating that the Universe ends in a vacuum state. For other values of $k$ the cosmological evolution ends in constant density and pressure thermodynamic phase. The deceleration parameter is given by
\bea
q(t)&=&-\frac{2(k-1)\left\{ 2(k-1)-\coth ^{2}\left[ \sqrt{\frac{(k-1)\Lambda }{%
6}}\left( t-t_{0}\right) \right] \right\} }{\left\{ \coth ^{2}\left[ \sqrt{%
\frac{(k-1)\Lambda }{6}}\left( t-t_{0}\right) \right] +2(k-1)\right\} {}^{2}}\nonumber\\
&&\times \text{csch}^{2}\left[ \sqrt{\frac{(k-1)\Lambda }{6}}\left( t-t_{0}\right) %
\right] -1,
\eea

In the large time limit $\lim _{t\rightarrow \infty}q(t)=-1$, and hence in this cosmological model the Universe ends in a de Sitter type accelerating phase.

\subsection{Model II:\;$P_{eff}=-\sigma (z)r_{eff}-\lambda$}
From now on, we will restrict our analysis to the case $p=P=0$, that is, we assume that the matter content of the Universe consists of pressureless dust.

Moreover, we consider that there is a linear relationship between the effective dark pressure and dark energy, which can be formulated generally as
\begin{equation}
P_{eff}(z)=-\sigma (z)r_{eff}(z)-\lambda,
\end{equation}%
where $\sigma (z)$, the parameter of the equation of state of the dark
energy, is assumed to be a function of the redshift, and $\lambda$ is a constant. For $\sigma (z)$ we
adopt the Chevalier-Polarski-Linder (CPL) parametrization \cite{Pol, Pol1}, with
\begin{equation}
\sigma (z)=\sigma _{0}+\sigma _{a}\frac{z}{1+z},
\end{equation}%
where $\sigma _{0}$ and $\sigma _{a}$ are constants. Then, the dynamical
system of equations describing the cosmological evolution takes the form
\begin{eqnarray}\label{M1}
-2(1+z)h(z)\frac{dh(z)}{dz}+3h^2(z)&=& 3\lambda +3\sigma (z)\nonumber\\
&&\times \left[2h(z)\Omega (z)-\Omega ^2(z)\right] ,\nonumber\\
\end{eqnarray}
and
\begin{eqnarray}\label{M2}
\hspace{-0.5cm}-2(1+z) h(z)\frac{d\Omega (z)}{dz}&=&2\left[ 3\sigma (z)-2\right] h(z)\Omega
(z)\nonumber\\
\hspace{-0.5cm}&&+\left[ 1-3\sigma (z)\right] \Omega ^{2}(z)+3\lambda ,
\end{eqnarray}
respectively.

The system of Eqs.~(\ref{M1}) and (\ref{M2}) must be solved with the initial
conditions $h(0)=1$, and $\Omega (0)=\Omega _{0}$. However, the initial
condition for $\Omega $ is determined by the first Friedmann equation (the
closure relation), Eq. (\ref{Fz1}) as $1=r(0)+2\Omega _{0}-\Omega _{0}^{2}$,
or $\left(\Omega _0-1\right)^2=r(0)$, giving
\begin{equation}
\Omega _{0}=1\pm \sqrt{r(0)}.
\end{equation}

Hence, the initial value of $\Omega $ is fully determined by the present day
value of the matter density. The variations as functions of the redshift of the Hubble function and of
the deceleration parameter are represented for Model II in Fig.~\ref{fig1}.

\begin{figure*}[htbp]
\centering
\includegraphics[width=0.490\linewidth]{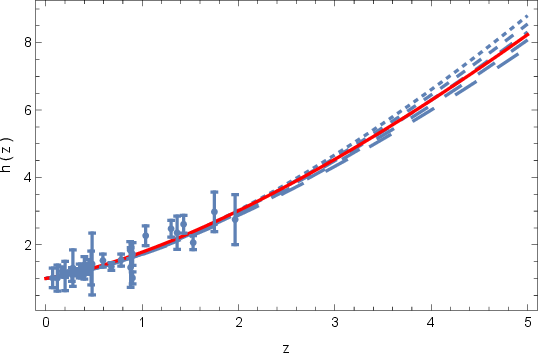} %
\includegraphics[width=0.490\linewidth]{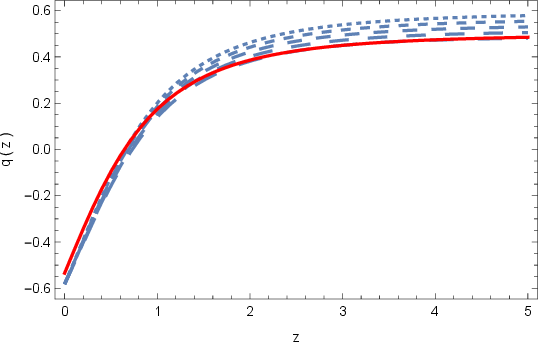}
\caption{Variations as a function of the redshift $z$ of the dimensionless
Hubble function (left panel), and of the deceleration parameter $q(z)$
(right panel) for Model II, for $\lambda =0.79$, $r(0)=0.311$, $%
\sigma _0=-0.10$, and different values of $\sigma _a$: $%
\sigma _a=0.04$ (dotted curve), $\sigma _a=0.06$ (short dashed
curve), $\sigma _a=0.08$ (dashed curve), $\sigma _a=0.10$ (long
dashed curve), and $\sigma _a=0.12$ (ultra-long dashed curve),
respectively. The predictions of the $\Lambda$CDM model are represented by
the red curve.}
\label{fig1}
\end{figure*}

As one can see from Fig.~\ref{fig1}, for the considered range of parameters,
Model II gives a good description of the observational data, and of the $%
\Lambda$CDM standard model up to a redshift of $z=5$. There are almost no deviations from $\Lambda$CDM in
the case of the deceleration parameter, with slightly higher values of $q$ predicted at higher redshifts. The behavior of the cosmological parameters is generally dependent on the adopted numerical values of the model parameters.

The variation with respect to the redfshift
of the matter density, and of the function $\Omega$ are represented in Fig.~%
\ref{fig2}.

\begin{figure*}[htbp]
\centering
\includegraphics[width=0.490\linewidth]{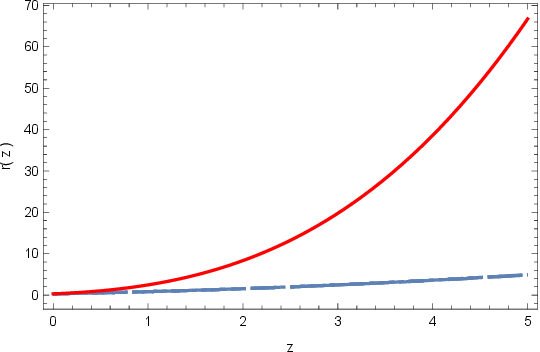} %
\includegraphics[width=0.490\linewidth]{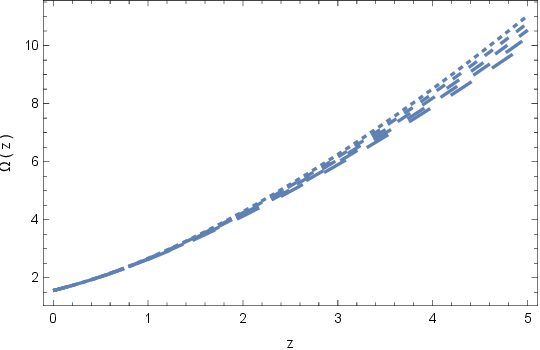}
\caption{Variations as a function of the redshift $z$ of the dimensionless
matter density $r(z)$ (left panel), and of the torsion vector component $
\Omega (z)$ (right panel) for Model II, for $\lambda =0.67$, $r(0)=0.311$, $
\sigma _0=-0.10$, and different values of $\sigma _a$:$
\sigma _a=0.04$ (dotted curve), $\sigma _a=0.06$ (short dashed
curve), $\sigma _a=0.08$ (dashed curve), $\sigma _a=0.10$ (long
dashed curve), and $\sigma _a=0.12$ (ultra-long dashed curve),
respectively. The predictions of the $\Lambda$CDM model are
represented by the red curve.}
\label{fig2}
\end{figure*}

The predictions of Model II for the behavior of the matter density do
coincide, up to $z\approx 1$, to the predictions of the $\Lambda $CDM model. However,
at higher redshifts, for the adopted range of parameters, Model II
predicts generally a much lower matter density than the standard $\Lambda$CDM paradigm.

The behavior of the function $Om(z)$ is represented in Fig.~\ref{fig3}. The $Om(z)$ diagnostic function has a very different behavior as compared to the diagnostic function of the standard $\Lambda$CDM model, indicating the possibility of the existence of several distinct cosmological regimes, and the presence at low redshifts ($z\approx 0.5$ of transitions from the quintessence to phantom like dynamical behaviors of the dark energy.

\begin{figure}[htbp]
\centering
\includegraphics[width=1.0\linewidth]{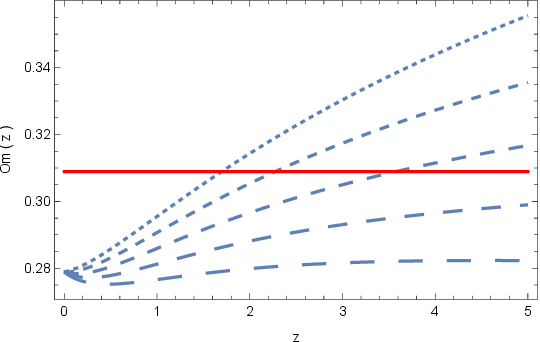}
\caption{Behavior of the function $Om (z)$ for Model II, for $\lambda =0.79$, $r(0)=0.311$, $
\sigma _0=-0.10$, and different values of $\sigma _a$: $%
\sigma _a=0.04$ (dotted curve), $\sigma _a=0.06$ (short dashed
curve), $\sigma _a=0.08$ (dashed curve), $\sigma _a=0.10$ (long
dashed curve), and $\sigma _a=0.12$ (ultra-long dashed curve),
respectively. The predictions of the $\Lambda$CDM model are
represented by the red curve.}
\label{fig3}
\end{figure}

\subsection{Model III: \;$P_{eff}=K r_{eff}^2$}

As a third cosmological model, we consider the case when the effective pressure and energy density satisfy
\begin{equation}
    P_{eff}=K r_{eff}^2,
\end{equation}
where $K={\rm constant}$, which means that we consider a polytropic equation of state with index $n=1$. In this case, we have
\begin{equation}
    -\frac{1}{3}  \left( 4h \Omega - \Omega^2 + 2 \frac{d \Omega}{d \tau} \right)=K \left( 2h \Omega - \Omega^2 \right)^2,
\end{equation}
or equivalently, in redshift variables
\begin{equation}
\begin{aligned}
    &-\frac{1}{3} \left[4 h(z) \Omega(z) - \Omega^2(z) -2(1+z)h(z) \frac{d \Omega}{dz}\right]\\
    &=K \left[ 2h(z) \Omega(z) - \Omega^2(z) \right]^2.
\end{aligned}
\end{equation}
Hence, the evolution equations for the polytropic dark energy model take the form
\begin{equation}
\begin{aligned}
    &-2(1+z)h(z) \frac{dh(z)}{dz}+3h^2(z)-4h(z) \Omega(z) + \Omega^2(z)\\
    &+2(1+z)h(z) \frac{d \Omega(z)}{dz}=0,
\end{aligned}
\end{equation}
\begin{equation}
\begin{aligned}
&\frac{1}{3} \left[4 h(z) \Omega(z) - \Omega^2(z) -2(1+z)h(z) \frac{d \Omega}{dz}\right]\\
&+K \left[ 2h(z) \Omega(z) - \Omega^2(z) \right]^2=0.
\end{aligned}
\end{equation}
Having solved the above system for initial conditions $h(0)=1, \Omega(0)=\Omega_0$, the energy density can be obtained from the closure relation
\begin{equation}
    r(z)=h^2(z)-2h(z) \Omega(z) + \Omega^2(z).
\end{equation}

The variations as functions of redshift of the Hubble function and of the
deceleration parameter are represented, for different values of $\Omega(0)$, in
Fig.~\ref{fig4}.

\begin{figure*}[htbp]
\centering
\includegraphics[width=0.490\linewidth]{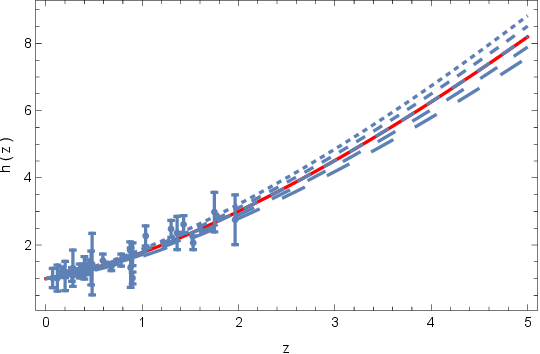} %
\includegraphics[width=0.490\linewidth]{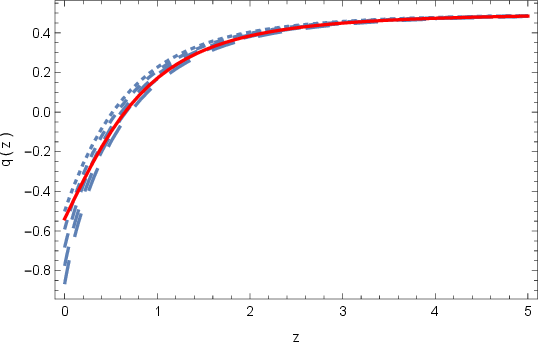}
\caption{Variations of the dimensionless Hubble function $h(z) $(left panel), and of the deceleration parameter $q(z)$ (right panel) for Model III with $K=-2$ and initial conditions $\Omega(0)=0.35$ (dotted curve), $\Omega(0)=0.37$ (short dashed curve), $\Omega(0)=0.39$ (dashed curve) , $\Omega(0)=0.41$ (long dashed curve),
$\Omega(0)=0.43$ (ultra-long dashed curve),respectively. The observational data for the Hubble function are represented with their error bars, while the red curve depicts the predictions of the $\Lambda$CDM model.}\label{fig4}
\end{figure*}

As one can see from Fig.~\ref{fig4}, Model III gives a very good description
of the observational data, as well as of the $\Lambda$CDM model. However, some small differences between the
evolution of the deceleration parameter do appear in the semi-symmetric metric gravity theory, as compared to the $\Lambda$CDM model, at low redshifts. The variations of the dimensionless matter density $r (z)$ and of the dimensionless torsion vector $\Omega (z)$ are
represented in Fig.~\ref{fig5}.

\begin{figure*}[htbp]
\centering
\includegraphics[width=0.490\linewidth]{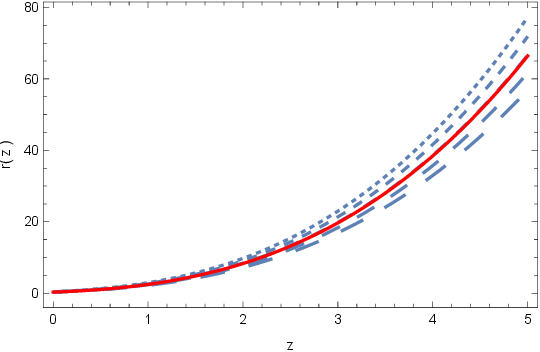} %
\includegraphics[width=0.490\linewidth]{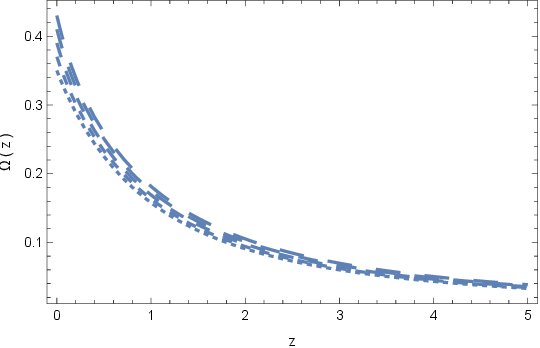}
\caption{Variations of the dimensionless matter energy density $r(z)$ (left panel), and of the torsion vector $\Omega(z)$ (right panel) for Model III with $K=-2$ and initial conditions $\Omega(0)=0.35$ (dotted curve), $\Omega(0)=0.37$ (short dashed curve), $\Omega(0)=0.39$ (dashed curve) , $\Omega(0)=0.41$ (long dashed curve),
$\Omega(0)=0.43$ (ultra-long dashed curve),respectively. The red curve represents the predictions of the $\Lambda$CDM model.}\label{fig5}
\end{figure*}

 As can be seen from Fig \ref{fig5}, the present day matter density predicted by Model III basically coincide with the predictions of the $\Lambda$CDM model.  $\Omega (z)$ monotonically decreases as a function of redshift, and it takes only positive values. The $Om(z)$ diagnostic function,  plotted in Fig.~\ref{fig6},  is monotonically increasing with the redshift, and it indicates the existence of significant differences between the present model and $\Lambda$CDM, which are more important at lower redshifts.

\begin{figure}[htbp]
\centering
\includegraphics[width=1.0\linewidth]{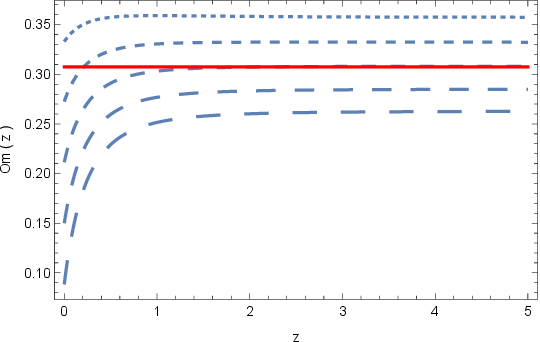}
\caption{Behavior of the function $Om z$ for Model III with $K=-2$ and initial conditions $\Omega(0)=0.35$ (dotted curve), $\Omega(0)=0.37$ (short dashed curve), $\Omega(0)=0.39$ (dashed curve), $\Omega(0)=0.41$ (long dashed curve),
$\Omega(0)=0.43$ (ultra-long dashed curve), respectively. The red curve represents the predictions of the $\Lambda$CDM model.}\label{fig6}
\end{figure}

\section{Discussion and final remarks}\label{section6}

The generalizations of the Riemannian geometry have for a long time attracted the attention of both mathematicians and physicists. In particular, geometries with torsion and nonmetricity have offered a very fruitful background for the development of physical theories that could offer a description of the gravitational interaction, going beyond general relativity. A large variety of such theories have been constructed, and their physical role was widely investigated. A very interesting result is related to the so-called geometrical trinity of gravity \cite{G3}, which is based on the unexpected result that the gravitational force can be described equivalently by using three independent formulations, based on curvature, torsion or nonmetricity, respectively. Each of these descriptions generate their own class of theories, which can be successfully used for the description of the gravitational phenomenology, and especially for the observational aspects of cosmology. General relativity, a gravitational theory based solely on the Riemannian curvature, still faces the problem of the cosmological constant, when confronted with observations, despite the excellent description of the data by the $\Lambda$CDM model. Theories based on torsion or nonmetricity have, on the other hand, a firm geometrical foundation, even that the physical interpretation of torsion and nonmetricity may still raise some unsolved questions. To explain the observational data, the standard $\Lambda$CDM paradigm postulates the existence of the dark energy and of the cold dark matter. But even after several decades of experimental search, no conclusive evidence has been found for the existence of such dark matter particles. Dark energy remains also elusive from the observational, or even experimental point of view. Thus the situation of the present day gravitational physics opens a new window for the consideration of extended gravitational theories, which could give an alternative explanation of the existing observational data, and of the lack of experimental data.

In the present work we have investigated a torsional extension of general relativity, based on a special type of torsion.  Interestingly enough, this notable extension of Riemannian geometry was initially introduced by Friedmann and Schouten in 1924 \cite{FriedmannSchouten}, through the notion of a semi-symmetric connection. This work was published by Friedmann around two years after his fundamental paper on relativistic cosmology did appear. If the mathematical landscape of semi-symmetric metric connections is relatively well-explored, their physical implications remain largely ignored. The semi-symmetric metric connections can also be related to another interesting connection, the Schr\"{o}dinger connection, whose physical implications are also unexplored to a great extent. Hence, the objective of this paper is twofold: firstly, we have investigated the relation between the semi-symmetric metric connection and the Schr\"{o}dinger connection; and secondly, we have examined the  extent to which semi-symmetric metric gravity is a viable extension of general relativity.

The basic mathematical characteristic of the semi-symmetric metric gravity is given by the specific, and very simple form of the torsion tensor, which is represented as $\tensor{T}{^\mu_\nu _\rho}=\pi_{\rho} \delta^{\mu}_{\nu} - \pi_{\nu} \delta^{\mu}_{\rho}$, where $\pi _\rho$ is the torsion vector, representing an arbitrary four-vector describing the geometric characteristic of the spacetime manifold. In the absence on nonmetricity, the mathematical expressions of the basic geometric quantities are rather simple (as compared to other modified gravitational theories), and one could easily construct the geometric Einstein tensor. As a first approximation to gravitational phenomena we postulate that, similarly to standard general relativity, the Einstein tensor is proportional to the energy-momentum tensor of matter. This allows to formulate the gravitational field equations of the semi-symmetric metric gravity, as given by Eqs.~(\ref{Einsteinsemisymmetricequation}).

As a next step in our investigations we have considered the cosmological implications of the theory. In an isotropic and homogeneous geometry, the torsion vector has only a temporal component $\omega (t)$. The two generalized Friedmann equations can be obtained in a straightforward way, and they contain some extra terms that we interpret as representing a dynamical dark energy. The effective dark energy depends only on $\omega$, and the Hubble function $H$, while the effective dark pressure also contains the time derivative of the torsion vector. The system of the generalized Friedmann equations also admits a vacuum de Sitter type solution, with the torsion tensor becoming a constant in the large time limit. de Sitter type solutions do also exist in the presence of matter.

In order to close the system of cosmological equations we need a supplementary evolution equation for the torsion tensor. In the present formalism of the semi-symmetric metric gravity such an equation cannot be obtained independently, and hence we need to resort to some physical and mathematical assumptions.

In this work we have considered three independent cosmological models. In the first cosmological model we have assumed that the effective dark energy and pressure are constants. This assumption allows us to obtain a full analytical solution of the field equations. The expressions of the Hubble function and of the scale factor, as well as the matter energy density and pressure can be obtained in a simple form. The model has a unique de Sitter limit, independent on the model parameters. However, there is no physical equation of state relating the matter density and pressure, even that an equation of state given in parametric form can be considered as describing the thermodynamic properties of matter.

In the second model we have imposed an equation of state for the effective pressure of the geometric dark energy. We have assumed a simple linear expression, with a redshift dependent parameter of the equation of state. The resulting model was compared with the observational data, and with the predictions of the standard $\Lambda$CDM model. This semi-symmetric metric cosmological model can give a good description of the observational data, and reproduce the predictions of the $\Lambda$CDM model both quantitatively, and qualitatively, in the case of the behavior of the matter energy density, where the model predicts slightly lower values at high redshifts.

However, some differences do appear in the behavior of the $Om(z)$ diagnostic function. While in $\Lambda$CDM, $Om(z)$ is a constant, in the semi-symmetric gravity the $Om(z)$ function has a complicated behavior, indicating the existence of a possible transition from a quintessence like behavior in the redshift range  $z\in (0,0.5)$ to a phantom-like behavior for $z>0.5$. Other similar transitions are also possible at higher redshifts. The cosmological behavior of this model is strongly influenced by the numerical values of the CPL parametrization of the effective equation of state, as well on the present day value of the matter density, which determines the present day value of the torsion vector through the relation $\left[\Omega (0)-1\right]^2=r(0)$, which follows directly from the first Friedmann equation. Hence, in this model there is an important relation between torsion and matter, with either matter determining (generating?) torsion, or with the inverse process taking place. During the cosmological evolution, the torsion tensor takes only positive values.

In the third cosmological model we have considered we have imposed a polytropic equation of state to describe the relation between the dark energy effective pressure and density, of the form $P_{eff}\sim r_{eff}^2$.  The cosmological evolution equations can again be formulated as a system of two strongly nonlinear ordinary differential equations, whose solutions can be obtained only numerically. It is important to note that in these models the cosmological evolution is determined by the initial value of $\Omega(0)$ only, for a fixed present day value of the normalized Hubble function. The polytropic semi-symmetric metric gravity model also gives a good description of the observational data for the Hubble function, and it can reproduce almost exactly the predictions of the $\Lambda$CDM model. The ordinary matter energy density also reproduces well the $\Lambda$CDM predictions. The torsion vector, a monotonically decreasing function of redshift,  has only positive values in the considered range of $z$.

The $Om(z)$ diagnostic function has, in Model III, a very different behavior as compared to Model I, indicating that the dark energy is quintessence-like during the entire cosmological evolution, and there is no transition to a phantom-like phase.

We would like to point out that our cosmological results are mostly qualitative in nature. In order to confirm/disapprove the validity of the semi-symmetric metric gravity theory cosmology, a detailed statistical analysis, requiring the investigation of a large number of datasets, is necessary.

From a cosmological perspective, the matter energy density is generally not conserved in the semi-symmetric metric gravity theory. However, this raises the question of the physical interpretation of the possible non-conservation effects within the theory.

One such possibility is to assume that the nonconservation of the matter energy-momentum tensor is related to particle creation effects. This type of interpretation was also considered in the framework of the modified gravity theories with geometry-matter coupling \cite{fRLm,fRT,book}. On the other hand, the particle creation processes are also a consequence of the quantum field theories in curved space-times, as first shown in \cite{Parker,Parker1,Parker2}, and they are a direct consequence of the time
evolution of the gravitational field.

Therefore, the present theory, once it is interpreted physically as describing particle creation in a cosmological background, could also lead to the possibility of an effective semiclassical approach for the description of the quantum field theoretical processes in time-dependent gravitational fields.

From the classical point of view matter creation can be described by using the formalism of the irreversible thermodynamics of open systems \cite%
{P-M,Lima,Su, Bar}. In the presence of matter generation processes, since the covariant divergences of the fundamental equilibrium thermodynamic quantities are different from zero,  the equilibrium equations must be adjusted to incorporate particle creation \cite{P-M,Lima,Su}.

The particle flux $N^{\mu} \equiv nu^{\mu}$, where $n$ is the particle number density, is thus described in the presence of gravitational particle creation by the balance equation
$
\nabla _{\mu}N^{\mu}=\dot{n}+3Hn=n\Psi,
$
where by $\Psi $ is the particle generation rate. If the condition $\Psi \ll H $ is satisfied, the particle creation processes are negligibly small as compared to the expansion rate of the Universe.

Another fundamental physical quantity, the entropy flux vector $S^\mu$ is defined as $%
S^{\mu} \equiv su^{\mu} = n\sigma u^{\mu}$, where $s$ is the entropy density, and $\sigma $ is the entropy per particle. According to the second law of thermodynamics, the
divergence of the entropy flux must satisfy the condition
$
\nabla _{\mu}S^{\mu}=n\dot{\sigma}+n\sigma \Psi\geq 0.
$
If the entropy density $\sigma $ is a constant, then obviously
$
\nabla _{\mu}S^{\mu}=n\sigma \Psi =s\Psi\geq 0.
$

Thus, in the case  $\sigma =\mathrm{constant}$, the variation of the total entropy is exclusively due to the gravitational, adiabatic matter creation processes. Since by definition $s>0$, the particle creation rate $\Psi$ must satisfy the condition $\Psi \geq 0$, which  shows that gravitational fields can create matter, but the inverse
process cannot take place.

In the presence of particle creation, the energy-momentum tensor of a
physical system must also be corrected to include particle creation, as well
as the second law of thermodynamics, so that it takes the form
$
T^{\mu \nu}=T^{\mu \nu}_\text{eq}+\Delta T^{\mu \nu},
$ \cite{Bar}
where $T^{\mu \nu}_\text{eq}$ represents the equilibrium component, while $%
\Delta T^{\mu \nu}$ is the modification due to the presence of matter
creation.

Hence, one can write generally
$
\Delta T_{\; 0}^0=0, \quad \Delta T_{\; i}^j=-p_c\delta_{\; i}^j,
$
where $p_c$ is the creation pressure, which represents, in a
phenomenological description, the effects of matter generation in a
macroscopic physical system. The
contribution of particle creation to the matter energy-momentum tensor is thus given by
$
\Delta T^{\mu \nu}=-p_ch^{\mu \nu}=-p_c\left(g^{\mu
\nu}+u^{\mu}u^{\nu}\right)
$ \cite{Bar},
giving
$
u_{\mu}\nabla _{\nu}\Delta T^{\mu \nu}=3Hp_c.
$

The total thermodynamic energy
balance equation, $u_{\mu}\nabla _{\nu}T^{\mu \nu}=0$, gives the
generalized energy conservation equation in the presence of particle creation
\be
\dot{\rho}+3H\left(\rho+p+p_c\right)=0.
\ee

The Gibbs law must also be satisfied by the thermodynamic quantities, and,
in the presence of matter creation, it can be written down as \cite{Lima}
\begin{equation}
n T d \left(\frac{s}{n}\right)=nTd%
\sigma=d\rho -\frac{\rho+p}{n}dn,
\end{equation}
where by $T$ we have denoted the thermodynamic temperature of the
given system.

After some simple and straightforward algebraic manipulations, the energy balance equation~(\ref{eqcons}) can be rewritten as
\begin{equation}
\dot{\rho}+3H\left( \rho +p+p_{c}\right) =0,  \label{76}
\end{equation}%
with the creation pressure $p_{c}$ of semi-symmetric metric gravity defined as
\begin{eqnarray}\label{pc1}
p_{c} =\frac{\omega }{8\pi }\Bigg[2\frac{\dot{H}}{H}-\frac{2\dot{\omega}}{H}+2H-2\omega\Bigg].
\end{eqnarray}%

Then, with the use of the creation pressure, the generalized energy balance
equation~(\ref{eqcons}) can be obtained, in a way similar to standard
general relativistic cosmology, from the vanishing of the divergence of the
total energy momentum tensor $T^{\mu \nu }$, defined as
$
T^{\mu \nu }=\left( \rho +p+p_{c}\right) u^{\mu }u^{\nu }+\left(
p+p_{c}\right) g^{\mu \nu },
$
and with $p_c$ given by Eq.~(\ref{pc1}). To obtain the conservation equation, one must also adopt the comoving frame
for the cosmological expansion.

By assuming adiabatic particle production, with $\dot{\sigma}=0$, from the
Gibbs law we obtain
$
\dot{\rho} =\left(\rho+p\right)\dot{n}/n =\left(\rho+p\right)\left(%
\Psi-3H\right).
$
By using the energy balance equation (\ref{76}) we obtain the relation
between the creation pressure and the particle creation rate as
\begin{equation}
\Psi=-3H\frac{p_c}{\rho+p}.
\end{equation}

Thus, in the semi-symmetric metric gravity theory the particle creation rate is given by the general
expression
\begin{eqnarray}
\hspace{-0.5cm}\Psi  &=&-\frac{3H\omega }{8\pi \left( \rho +p\right) }\Bigg[2\frac{\dot{H}}{%
H}-\frac{2\dot{\omega}}{H}+2H-2\omega\Bigg].
\end{eqnarray}

The particle creation rate $\Psi $ must satisfy the condition $\Psi \geq 0$,
which, by taking into account that $H$, $\rho $, and $p$ are all positive,
is equivalent to the condition $p_{c}<0$ for all times. Hence, the condition
of the negativity of the creation pressure imposes a strong constraint on
the physical parameters of the  semi-symmetric gravity theory.

By taking into account that $2\dot{H}/H=-2(1+q)H$, where $q$ is the
deceleration parameter, the particle creation rate takes the form
\begin{eqnarray}
\hspace{-0.8cm}\Psi  =\frac{3H\omega }{8\pi \left( \rho +p\right) }\Bigg[2qH+\frac{2\dot{\omega}}{H}+2\omega  \Bigg].
\end{eqnarray}

The divergence of the entropy flux vector is obtained in terms of the
creation pressure as
$
\nabla _{\mu}S^{\mu}=-3 n \sigma H p_c/(\rho +p).
$
The condition $p_{c}<0$ assures the positivity for all times of the entropy production
rate, as required by the second law of thermodynamics. Explicitly, the
entropy production rate in the  semi-symmetric gravity theory is obtained as
\be
\nabla _{\mu }S^{\mu } =\frac{3n\sigma H\omega }{8\pi (\rho +p)}\Bigg[%
2qH+\frac{2\dot{\omega}}{H}+2\omega\Bigg].
\ee

We consider now the temperature evolution of the newly created particles.  We assume that the fluid is
described thermodynamically by two equations of state for the density and
pressure, which are given in the general form $\rho =\rho \left( n,T\right) ,p=p\left( n,T\right)$. Then, the temperature evolution of the newly
created particle in a relativistic fluid can be obtained as \cite{Su}
\begin{equation}
\frac{\dot{T}}{T}=\left( \frac{\partial p}{\partial \rho
}\right) _{n}\frac{\dot{n}}{n}=c_{s}^{2}\frac{\dot{n}}{n},  \label{102}
\end{equation}%
where $c_{s}^{2}=\left( \partial p/\partial \rho \right) _{n}$ is the speed
of sound. Eq.~(\ref{102}) can also be rewritten as
\begin{equation}
\frac{\dot{T}}{T}=c_{s}^{2}\left( \Psi -3H\right)
=-3c_{s}^{2}H\left( 1+\frac{p_{c}}{\rho +p}\right) .
\end{equation}%

Hence, in the semi-symmetric metric gravity theory, the time variation of the temperature of the newly created
particles is given by
\begin{eqnarray}
\hspace{-0.5cm}\frac{\dot{T}}{T} =3c_{s}^{2}H\Bigg\{1+\frac{\omega }{%
8\pi (\rho +p)}\Bigg[2qH+\frac{2\dot{\omega}}{H}+2\omega  \Bigg]\Bigg\}.
\end{eqnarray}%

For the temperature of the particles to increase, $\dot{T}%
>0$, the thermodynamic condition $1+p_{c}/(\rho +p)<0$ must be satisfied,
which is equivalent, for $\rho >0$, $p>0$, to $\rho +p+p_{c}<0$, or $\rho
+p<-p_{c}$.
If $\left(\partial p/\partial \rho\right)_n=c_s^2=\gamma =\mathrm{\ constant}%
>0$, we find the temperature-newly created particle number relation as given
by the simple power law expression $T \sim n^\gamma$.

Thus, we have shown that the semi-symmetric gravity theory admits a full thermodynamic interpretation in terms of the thermodynamics of open systems. This interpretation may lead to some new insights into the physical foundations of the theory, and of its possible relation with some phenomenological models of quantum gravity, which also predict the existence of particle creation due to the time evolution of the gravitational field \cite{Parker, Parker1,Parker2}. We would also like to point out that in the open irreversible thermodynamic model of particle creation of the semi-symmetric metric gravity, the creation of the particles can also take place in the form of dynamical scalar (perhaps Higgs) fields $\sigma$, originating from the fields $\omega$ and $\dot{\omega}$, according to a reaction of the form $\omega +\omega \rightarrow \sigma +\sigma$. However, the nature of the particle produced due to the decay of the torsion vector cannot be predicted by the present theory, but we can still assume that they    decay into dark or ordinary matter. Thus, for example, dark matter in the form of a scalar field may  be a consequence of the particle creation due to the torsion decay taking place in the early phases of the evolution of the Universe. A similar creation of dark energy particles from torsion cannot be either excluded.

To conclude our study, in the present paper we have shown  that the extended gravitational theory obtained in the mathematical formalism of the semi-symmetric metric gravity theory can fully explain the dynamical evolution of the Universe, its accelerated expansion, a possible early inflationary behavior,
and, perhaps, even the formation of cosmic structures in the Universe. The present model does not introduce/postulates any new physical or coupling parameters in its formulation, and a single geometric quantity, the torsion, describes the behavior of Universe at all cosmological scales. Several cosmological models can be obtained in this framework, generally non-conservative with respect to the matter energy-density. These models can describe the cosmological data under a number of realistic assumptions, without the need of introducing a cosmological constant. Still, these cosmological models automatically generate a dark energy (and perhaps even dark matter) component. Unlike general relativity, the semi-symmetric metric gravity theory can also provide a phenomenological description of particle creation, which could prove to be of fundamental importance in the understanding of the very early stages of the Universe, and of the processes that led to the emergence of matter. Thus, the results presented in this paper can open some new perspectives for the comprehension of the complex aspects of the cosmic dynamics, and of the role of the
torsion, and other geometric quantities, in an eternally evolving Universe.

 \section*{Acknowledgments}

We would like to thank to Prof. Carlo Alberto Mantica for pointing out an important sign error in the first version of our manuscript. The work of T.H. is supported by a grant from the
Romanian Ministry of Education and Research, CNCS-UEFISCDI, project number PN-III-P4-ID-PCE2020-
2255 (PNCDI III). The work of L.Cs. is supported by Collegium Talentum  Hungary and the StarUBB research fellowship.

\begin{appendix}

\section{Decomposition of the affine connection}\label{appendixA}

This Section is devoted to the proof of the formula \eqref{generalconnection}. We start by using the definition of non-metricity in our convention. Doing a circular permutation results in
\begin{equation}\label{nonmetricity1}
 Q_{\lambda \nu \rho}=-\nabla_{\lambda} g_{\nu \rho}=-\partial_{\lambda} g_{\nu \rho} + \tensor{\Gamma}{^\beta _\nu _\lambda} g_{\beta \rho} + \tensor{\Gamma}{^\beta _\rho _\lambda} g_{\nu \beta} ,
\end{equation}
\begin{equation}\label{nonmetricity2}
Q_{\rho \lambda \nu}=- \nabla_{\rho} g_{\lambda \nu}= - \partial_{\rho} g_{\lambda \nu} + \tensor{\Gamma}{^\beta_\lambda _\rho} g_{\beta \nu} + \tensor{\Gamma}{^\beta _\nu _\rho} g_{\lambda \beta},
\end{equation}
\begin{equation}\label{nonmetricity3}
Q_{\nu \rho \lambda}=-\nabla_{\nu} g_{\rho \lambda}=-\partial_{\nu} g_{\rho \lambda} + \tensor{\Gamma}{^\beta _\rho _\nu}g_{\beta \lambda} + \tensor{\Gamma}{^\beta _\lambda _\nu} g_{\rho \beta}.
\end{equation}

We add \eqref{nonmetricity3}, \eqref{nonmetricity2} and subtract \eqref{nonmetricity1}, which results in
\bea
 Q_{\nu \rho \lambda} +Q_{\rho \lambda \nu} -Q_{\lambda \nu \rho}&=&- \left(\partial_{\nu} g_{\rho \lambda} +\partial_{\rho} g_{\lambda \nu}  -\partial_{\lambda} g_{\nu \rho}\right)\nonumber\\
    &+&g_{\rho \beta} \left(\tensor{\Gamma}{^\beta _\lambda _\nu}- \tensor{\Gamma}{^\beta _\nu _\lambda}\right)\nonumber\\
    &+& g_{\beta \lambda} \left(\tensor{\Gamma}{^\beta _\rho _\nu}+ \tensor{\Gamma}{^\beta _\nu _\rho} \right)\nonumber\\
    &+& g_{\beta \nu} \left(\tensor{\Gamma}{^\beta_\lambda _\rho} - \tensor{\Gamma}{^\beta _\rho _\lambda} \right).
\eea

Introducing the symmetrization and antisymmetrization notation gives
\bea
    Q_{\nu \rho \lambda} +Q_{\rho \lambda \nu} -Q_{\lambda \nu \rho}&=&- \left(\partial_{\nu} g_{\rho \lambda} +\partial_{\rho} g_{\lambda \nu}  -\partial_{\lambda} g_{\nu \rho}\right)\nonumber\\
    &+& 2 g_{\rho \beta} \tensor{\Gamma}{^\beta_{[\lambda \nu]}} \nonumber\\
    &+& 2 g_{\beta \lambda} \tensor{\Gamma}{^\beta_{(\nu \rho)}}\nonumber\\
    &+&2g_{\beta \nu} \tensor{\Gamma}{^\beta_{[\lambda \rho]}}.
\eea

We use our convention, namely that torsion is twice the antisymmetric part of the Christoffel symbols and that the symmetric part can be rewritten as the whole part minus the antisymmetric part to obtain
\bea
    Q_{\nu \rho \lambda} +Q_{\rho \lambda \nu} -Q_{\lambda \nu \rho}&=&- \left(\partial_{\nu} g_{\rho \lambda} +\partial_{\rho} g_{\lambda \nu}  -\partial_{\lambda} g_{\nu \rho}\right)\nonumber\\
    &+&g_{\rho \beta} \tensor{T}{^\beta _\nu _\lambda}\nonumber\\
    &+&2g_{\beta \lambda}  \tensor{\Gamma}{^\beta_{\nu \rho}} - g_{\beta \lambda} \tensor{T}{^\beta}_{\rho \nu}\nonumber\\
    &+&g_{\beta \nu} \tensor{T}{^\beta _\rho _\lambda}.
\eea
We lower now the indices with the help of the metric tensor, and thus
\begin{equation}
    \begin{aligned}
    Q_{\nu \rho \lambda} +Q_{\rho \lambda \nu} -Q_{\lambda \nu \rho}=&- \left(\partial_{\nu} g_{\rho \lambda} +\partial_{\rho} g_{\lambda \nu}  -\partial_{\lambda} g_{\nu \rho}\right)\\
    &+T_{\rho \nu \lambda} \\
    &+2g_{\beta \lambda} \tensor{\Gamma}{^\beta_{\nu \rho}} - T_{\lambda \rho \nu}\\
    &+ T_{\nu \rho \lambda}.
    \end{aligned}
\end{equation}
Multiplying by $g^{\mu \lambda}$ yields
\begin{equation}
    \begin{aligned}
        g^{\mu \lambda} \left(Q_{\nu \rho \lambda} +Q_{\rho \lambda \nu} -Q_{\lambda \nu \rho} \right)=&-g^{\mu \lambda} \left(\partial_{\nu} g_{\rho \lambda} +\partial_{\rho} g_{\lambda \nu}  -\partial_{\lambda} g_{\nu \rho}\right)\\
        &+g^{\mu \lambda} \left( T_{\rho \nu \lambda}- T_{\lambda \rho \nu} + T_{\nu \rho \lambda} \right)\\
        &+2 \delta^{\mu}_{\beta} \tensor{\Gamma}{^\beta _\nu _\rho}.
    \end{aligned}
\end{equation}
Via a simple algebraic manipulation, we express the Christoffel symbols from the right hand side:
\begin{equation}
    \begin{aligned}
        \tensor{\Gamma}{^\mu _\nu _\rho}&=\frac{1}{2} g^{\mu \lambda} \left(\partial_{\nu} g_{\rho \lambda} +\partial_{\rho} g_{\lambda \nu}  -\partial_{\lambda} g_{\nu \rho}\right)\\
        &+\frac{1}{2}g^{\mu \lambda} \left(Q_{\nu \rho \lambda} +Q_{\rho \lambda \nu} -Q_{\lambda \nu \rho} \right)\\
        &-\frac{1}{2} g^{\mu \lambda}(T_{\rho \nu \lambda}-T_{\lambda \rho \nu} + T_{\nu \rho \lambda}).
    \end{aligned}
\end{equation}
Identifying the Christoffels of the Levi-civita connection
\begin{equation}
    \tensor{\gamma}{^\mu _\nu _\rho}=\frac{1}{2} g^{\mu \lambda} \left(\partial_{\nu} g_{\rho \lambda} +\partial_{\rho} g_{\lambda \nu}  -\partial_{\lambda} g_{\nu \rho}\right)
\end{equation}
leads to the desired result
\begin{equation}
    \begin{aligned}
        \tensor{\Gamma}{^\mu _\nu _\rho}&=  \tensor{\gamma}{^\mu _\nu _\rho}  +\frac{1}{2}g^{\mu \lambda} \left(Q_{\nu \rho \lambda} +Q_{\rho \lambda \nu} -Q_{\lambda \nu \rho} \right) \\
         &-\frac{1}{2} g^{\mu \lambda}(T_{\rho \nu \lambda}-T_{\lambda \rho \nu} + T_{\nu \rho \lambda}).
    \end{aligned}
\end{equation}

\section{Derivation of the Friedmann Equations}\label{appendixC}

Here we derive the generalized Friedmann equations in semi-symmetric metric gravity theory, given the assumptions in Section \ref{Friedmannsection}. First, recall that the non-zero components of the Ricci tensor of the Levi-Civita connection are given by
\begin{equation}
    \overset{\circ}{R}_{00}=-3\frac{\ddot a}{a}, \; \; \overset{\circ}{R}_{11}=\overset{\circ}{R}_{22}=\overset{\circ}{R}_{33}=a \ddot a +2 \dot a^2.
\end{equation}
Similarly, the Ricci scalar takes the well known form
\begin{equation}
    \overset{\circ}{R}=6 \left(\frac{ \ddot a}{a}+\frac{\dot a ^2}{a^2} \right).
\end{equation}
The non-vanishing Christoffel symbols are given by
\begin{equation}
    \tensor{\gamma}{^0_i_j}=a \dot{a} \delta_{ij}, \; \; i,j=1,2,3;
\end{equation}
\begin{equation}
    \tensor{\gamma}{^i_0_j}=\frac{\dot a}{a} \delta^{i}_{j}, \; \; i,j=1,2,3.
\end{equation}
As a first step, we will write down the Einstein equation \eqref{Einsteinsemisymmetricequation} in the $00$ component, which reads
\begin{equation}\label{einstein00}
\begin{aligned}
    \overset{\circ}{R}_{00}&-\frac{1}{2} g_{00} \overset{\circ}{R} - \overset{\circ}{\nabla}_{0} \pi_{0} - \overset{\circ}{\nabla}_{0} \pi_{0} \\
    &+2 \pi_0 \pi_0 +2 g_{00} \overset{\circ}{\nabla}_{\lambda} \pi^{\lambda} +g_{00} \pi^{\rho} \pi_{\rho}=8 \pi T_{00}.
\end{aligned}
\end{equation}
From the conventions in Section \ref{Friedmannsection}, we have that
\begin{equation}
    \pi_{0}=-\omega, \; \; \pi^{0}=\omega, \; \; T_{00}=\rho
\end{equation}
which immediately implies
\begin{equation}
    \overset{\circ}{\nabla}_{0} \pi_{0}=\partial_{0} \pi_0=-\dot \omega.
\end{equation}
The covariant divergence of $\pi$ can be computed as
\begin{equation}
    \overset{\circ}{\nabla}_{\beta} \pi^{\beta}=\partial_{\beta} \pi^{\beta} + \tensor{\gamma}{^\beta _\beta _\rho} \pi^{\rho}=\dot{\omega} +3 \frac{ \dot {a}}{a}\omega.
\end{equation}
 Putting everything together, and substituting back into \eqref{einstein00} results in
\begin{equation}
\begin{aligned}
    -3 \frac{\ddot a}{a} &+ \frac{1}{2} 6 \left(\frac{ \ddot a}{a}+\frac{\dot a ^2}{a^2} \right)  + \dot \omega + \dot \omega \\
    &+2 \omega^2 -2 \left(\dot \omega +3 \frac{\dot a}{a} \omega \right) -(-\omega^2)=8 \pi p.
\end{aligned}
\end{equation}
It is readily seen that the terms containing second derivatives of time vanish, and similarly $\dot \omega$ vanishes. An algebraic simplification yields
\begin{equation}
    3 \frac{\dot a ^2}{a^2} - 6 \frac{\dot a}{a}\omega+3 \omega^2=8 \pi \rho.
\end{equation}
Introducing the Hubble parameter $H=\frac{\dot a}{a}$, the first Friedmann equation \eqref{F1}
\begin{equation}
    3H^2= 8 \pi \rho +6 H \omega - 3 \omega^2
\end{equation}
is obtained.

For the second Friedmann equation, we have to consider the $ii$ components of the Einstein equation \eqref{Einsteinsemisymmetricequation}, which are given by
\begin{equation}\label{einstein11}
\begin{aligned}
   \overset{\circ}{R}_{ii}&-\frac{1}{2} g_{ii} \overset{\circ}{R} - \overset{\circ}{\nabla}_i \pi_i - \overset{\circ}{\nabla}_i \pi_i \\
   &+2 \pi_i \pi_i + 2 g_{ii} \overset{\circ}{\nabla}_{\lambda} \pi^{\lambda} +g_{ii} \pi^{\rho} \pi_{\rho}= 8 \pi T_{ii}.
    \end{aligned}
\end{equation}
It can be easily seen that it does not matter whether $i=1,i=2,i=3$, the equations will be the same. For simplicity, we consider $i=1$, in which case a lot of terms vanish. This follows from our conventions
\begin{equation}
    \pi_{1}=0, \; \; T_{11}=pa^2.
\end{equation}
Similarly, we have
\begin{eqnarray}
    \nabla_i \pi_i=\partial_i \pi_i - \tensor{\Gamma}{^\rho_{ii}} \pi_{\rho}=0-\tensor{\Gamma}{^0_i_i} (-\omega)=a \dot a \omega.
\end{eqnarray}
Putting everything together, and substituting into \eqref{einstein11} leads to
\begin{equation}
\begin{aligned}
    a \ddot a &+2 \dot a^{2} -\frac{1}{2} a^2 6 \left( \frac{\ddot a}{a} +\frac{\dot a^2}{a^2} \right) - a \dot a \omega - a \dot a \omega \\
    &+0 +2 a^2 \left(\dot \omega + 3 \frac{\dot a}{a} \omega \right)+ a^2 \left( - \omega^2 \right)=8 \pi p a^2.
\end{aligned}
\end{equation}
Expanding the brackets yields
\begin{equation}
\begin{aligned}
    \textcolor{blue}{a \ddot{a}} &+ \textcolor{red}{2 \dot a^2} - \textcolor{blue}{3a \ddot a} - \textcolor{red}{3 \dot a^2}  -\textcolor{orange}{2 a \dot a \omega} \\
    &+2 a^2 \dot \omega +\textcolor{orange}{6 a \dot a \omega} - a^2 \omega^2=8\pi p a^2.
\end{aligned}
\end{equation}
Collecting the terms together gives
\begin{equation}
    -\textcolor{blue}{2 a \ddot a} - \textcolor{red}{\dot{a}^2}+ 2 a^2 \dot \omega+\textcolor{orange}{4a \dot a \omega} -a^2 \omega^2=8 \pi p a^2.
\end{equation}
Dividing by $a^2$ results in
\begin{equation}\label{friedmannwithas}
    -2\frac{\ddot a}{a} - \frac{\dot a^2}{a^2}+ 2\dot \omega +4 \frac{\dot a}{a} \omega - \omega^2=8 \pi p.
\end{equation}
By introducing $H=\frac{ \dot a}{a}$, it follows that
\begin{equation}
    \dot{H}=\frac{\ddot a a - \dot a^2}{a^2}=\frac{\ddot a}{a} - \frac{\dot a^2}{a^2} \iff \frac{\ddot a}{a}=\dot H +H^2.
\end{equation}
Consequently, equation \eqref{friedmannwithas} can be rewritten as
\begin{equation}
    - 2 \dot H -2 H^2 - H^2 + 2\dot \omega + 4 H \omega - \omega^2=8 \pi p.
\end{equation}
The final form of the second Friedmann equations \eqref{F2} is therefore obtained
\begin{equation}
    2 \dot{H}+3H^2= - 8 \pi p +4 H \omega - \omega^2 + 2\dot \omega.
\end{equation}

\section{Existence, uniqueness and coordinate-free treatment of semi-symmetric connections}\label{appendixD}

In this Appendix, we present a fully geometric, coordinate-free treatment of semi-symmetric connections.
We introduce the semi-symmetric connections in a formal, coordinate-free way.
\begin{definition}
    On a semi-Riemannian manifold $(M,g)$, a connection $\nabla$ is called \textbf{semi-symmetric} if there exists $\pi \in \Gamma(T^{*}M)$, such that
    \begin{equation}
        \nabla_{X} Y - \nabla_{Y} X - [X,Y]=\pi(Y)X - \pi(X) Y, \forall X,Y \in \Gamma(TM).
    \end{equation}
\end{definition}
\begin{remark}
    Note that in this case, the torsion takes a very specific form, namely:
    \begin{equation}\label{torsionremark}
    \begin{aligned}
        &T(\omega,X,Y)=\omega( \nabla_X Y -\nabla_Y X -[X,Y])\\
        &=\omega \left(\pi(Y)X-\pi(X)Y \right), \; \; \forall X,Y \in \Gamma(TM), \; \; \omega \in \Gamma(T^{*}M).
        \end{aligned}
    \end{equation}
\end{remark}
\begin{proposition}
    In local coordinates, the torsion of a semi-symmetric connection takes the form \eqref{semisymmetric}, i.e.
    \begin{equation}
        \tensor{T}{^\mu _ \nu _\rho}= \pi_{\rho} \delta^{\mu}_{\nu} - \pi_{\nu} \delta^{\mu}_{\rho}.
    \end{equation}
\end{proposition}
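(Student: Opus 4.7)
The plan is to evaluate the coordinate-free torsion formula \eqref{torsionremark} on the dual basis one-form $dx^\mu$ and the coordinate vector fields $\partial_\nu, \partial_\rho$, and then read off the components by tensoriality.

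More precisely, the components of the torsion tensor are by definition
\begin{equation}
\tensor{T}{^\mu_\nu_\rho} = T(dx^\mu, \partial_\nu, \partial_\rho).
\end{equation}
Substituting $X = \partial_\nu$, $Y = \partial_\rho$ into the semi-symmetric torsion formula from the Remark gives
\begin{equation}
\tensor{T}{^\mu_\nu_\rho} = dx^\mu\bigl(\pi(\partial_\rho)\,\partial_\nu - \pi(\partial_\nu)\,\partial_\rho\bigr).
\end{equation}
Next, I would invoke the fact that $\pi(\partial_\rho)$ and $\pi(\partial_\nu)$ are scalar functions (not vector fields), so that $C^\infty(M)$-multilinearity of $dx^\mu$ lets me pull them outside:
\begin{equation}
\tensor{T}{^\mu_\nu_\rho} = \pi(\partial_\rho)\,dx^\mu(\partial_\nu) - \pi(\partial_\nu)\,dx^\mu(\partial_\rho).
\end{equation}
Finally, using the duality $dx^\mu(\partial_\beta) = \delta^\mu_\beta$ together with the standard identification $\pi_\nu := \pi(\partial_\nu)$ of the components of the one-form $\pi$, I conclude
\begin{equation}
\tensor{T}{^\mu_\nu_\rho} = \pi_\rho \delta^\mu_\nu - \pi_\nu \delta^\mu_\rho,
\end{equation}
which is precisely \eqref{semisymmetric}.

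There is no real obstacle here: the argument is a direct, two-line unpacking of the coordinate-free definition, parallel in spirit to the derivation already carried out earlier in Section \ref{basicconcepts} to obtain $\tensor{T}{^\mu_\nu_\rho} = 2\tensor{\Gamma}{^\mu_{[\rho\nu]}}$ from the Lie-bracket form of $T$. The only thing worth being careful about is the ordering convention for the lower indices of $T$, since a sign or index swap in $T(\omega, X, Y)$ versus $T(\omega, Y, X)$ would flip which of $\pi_\rho \delta^\mu_\nu$ or $\pi_\nu \delta^\mu_\rho$ appears with the plus sign; I would explicitly note that my convention matches the one fixed in the Remark, so that the resulting expression agrees with \eqref{semisymmetric} as stated.
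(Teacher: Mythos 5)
Your proposal is correct and follows exactly the same route as the paper's own proof: evaluate the coordinate-free torsion formula from the Remark on $dx^{\mu},\partial_{\nu},\partial_{\rho}$, identify $\pi_{\nu}=\pi(\partial_{\nu})$, and use $C^{\infty}(M)$-multilinearity together with $dx^{\mu}(\partial_{\beta})=\delta^{\mu}_{\beta}$. Your added remark about the index-ordering convention is a sensible precaution but introduces nothing beyond what the paper already fixes in its definition of $T(\omega,X,Y)$.
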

\begin{proof}
    We choose $\omega=dx^{\mu}, X=\partial_{\nu},Y=\partial_{\rho}$. Then, according to \eqref{torsionremark}, we have
    \begin{equation}
        \tensor{T}{^\mu _\nu _\rho}=T \left(dx^\mu, \partial_\nu,\partial_\rho \right)=dx^{\mu} \left(\pi(\partial_\rho) \partial_\nu - \pi(\partial_\nu) \partial_\rho \right).
    \end{equation}
   We observe that the components of $\pi$ appear by definition, i.e.
   \begin{equation}
       \tensor{T}{^\mu _\nu _\rho}=dx^{\mu}( \pi_\rho \partial_{\nu} - \pi_{\nu} \partial_{\rho}). \end{equation}
Finally, by using $C^{\infty}(M)$ multilinearity of $dx^{\mu}$ and the notion of dual basis, yields the desired result
\begin{equation}
    \tensor{T}{^\mu _\nu _\rho}=\pi_{\rho} \delta^{\mu}_{\nu} - \pi_{\nu} \delta^{\mu}_{\rho}.
\end{equation}
\end{proof}

We now come to the existence and uniqueness of a semi-symmetric metric connection, a profound result due to Kentaro Yano \cite{Kentaroyano}.
\begin{theorem}[Yano] \label{theoremyano} Let $(M,g,\nabla)$ be a semi-Riemannian manifold with a semi-symmetric connection and denote $\overset{\circ}{\nabla}$ the Levi-Civita connection. If $\nabla$ is metric-compatible, then
\begin{equation}
    \nabla_{X} Y=\overset{\circ}{\nabla}_{X} Y +\pi(Y)X - g(X,Y)P,
\end{equation}
where $P$ is the dual vector field associated to $\pi$, i.e. $g(X,P)=\pi(X)$.
\end{theorem}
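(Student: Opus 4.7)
The plan is to exploit the decomposition result established in Appendix \ref{appendixA}: an affine connection on $(M,g)$ is uniquely determined by its torsion together with its non-metricity. Under the hypotheses of the theorem both are fixed (non-metricity vanishes by metric-compatibility, and the torsion is prescribed by the semi-symmetric ansatz via $\pi$), so it is enough to exhibit \emph{any} connection that is metric-compatible and has semi-symmetric torsion with one-form $\pi$; the formula in the statement will then be forced by uniqueness. I would therefore define the candidate $\tilde{\nabla}_X Y := \overset{\circ}{\nabla}_X Y + \pi(Y) X - g(X,Y) P$ and verify the three requisite properties in turn.

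The connection axioms for $\tilde{\nabla}$ are immediate. Tensoriality in the first slot holds because each of the three summands is $C^{\infty}(M)$-linear in $X$. The Leibniz rule in the second slot follows because $\pi$ is a one-form (so $\pi(fY) = f \pi(Y)$) and $g$ is tensorial (so $g(X, fY) = f\, g(X,Y)$), whence the only $Xf$ contribution comes from the Levi-Civita piece. The torsion computation is equally short: antisymmetrising the definition of $\tilde{\nabla}$, the two $P$-correction terms cancel by symmetry of $g$, while the two $\pi$-correction terms assemble into $\pi(Y)X - \pi(X)Y$, matching the prescribed torsion exactly since $\overset{\circ}{\nabla}$ is torsion-free.

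The main obstacle, and the only place where the duality $g(\,\cdot\,, P) = \pi(\,\cdot\,)$ is genuinely used, is the metric-compatibility check. Here I would expand $(\tilde{\nabla}_X g)(Y,Z) = X g(Y,Z) - g(\tilde{\nabla}_X Y, Z) - g(Y, \tilde{\nabla}_X Z)$, let the Levi-Civita pieces annihilate the $X g(Y,Z)$ term via $\overset{\circ}{\nabla} g = 0$, and then observe that the residual four correction terms pair up: the contribution $\pi(Y) g(X,Z)$ from the first summand cancels the $g(X,Z)\, g(Y,P) = g(X,Z) \pi(Y)$ from the second, and symmetrically the $\pi(Z)$ terms eliminate each other. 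With metric-compatibility, semi-symmetric torsion, and the connection axioms all verified, the uniqueness statement built into the decomposition formula of Appendix \ref{appendixA} identifies $\tilde{\nabla}$ with $\nabla$, completing the proof.
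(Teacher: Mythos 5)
Your proposal is correct, but it proves the theorem by a genuinely different route than the paper. The paper's proof is constructive: it sets $U(X,Y):=\nabla_X Y-\overset{\circ}{\nabla}_X Y$, uses metric-compatibility of both connections to obtain the antisymmetry $g(U(X,Y),Z)+g(U(X,Z),Y)=0$, and then runs the standard cyclic-permutation (Koszul-type) argument on $g(T(X,Y),Z)$ to \emph{solve} for $U$ in terms of the torsion and an auxiliary tensor $T'$, finally substituting the semi-symmetric form of $T$. You instead guess the candidate $\tilde{\nabla}_X Y=\overset{\circ}{\nabla}_X Y+\pi(Y)X-g(X,Y)P$, verify that it is a connection, that it is metric-compatible, and that its torsion is $\pi(Y)X-\pi(X)Y$, and then appeal to the uniqueness of a connection with prescribed torsion and non-metricity. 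All three verifications check out (the $P$-terms cancel in the torsion by symmetry of $g$, and the four correction terms cancel pairwise in the compatibility check exactly as you describe), and the uniqueness you invoke is indeed the content of the decomposition formula \eqref{generalconnection} proved in Appendix \ref{appendixA}, so the argument is complete. The trade-off: your route is shorter and cleaner once the formula is known, but it presupposes the answer and leans on the coordinate computation of Appendix \ref{appendixA} for uniqueness, whereas the paper's derivation discovers the formula from scratch and stays entirely within the coordinate-free framework of Appendix \ref{appendixD}; it is also worth noting that the uniqueness fact you cite is itself proved by essentially the same cyclic-sum manipulation the paper uses, so the two proofs share the same engine, just deployed at different points.
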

\begin{proof}
    Since both $\nabla,\overset{\circ}{\nabla}$ are connections, their difference defines a $(1,2)$-tensor field
    \bea\label{zerotheqn}
        U:\Gamma(TM) &\times& \Gamma(TM) \to \Gamma(TM), \nonumber\\
        U(X,Y)&:=&\nabla_X Y - \overset{\circ}{\nabla}_X Y.
    \eea
    As $\nabla$ is metric-compatible by assumption it follows that
    \bea\label{firsteqn}
        \nabla_{X}(g(Y,Z))&=&g(\nabla_X Y,Z) + g(Y,\nabla_X Z), \nonumber\\
         &&\forall X,Y,Z \in \Gamma(TM).
    \eea

    Similarly, the Levi-Civita connection is metric compatible, which spelled out reads
    \bea\label{secondeqn}
        \overset{\circ}{\nabla}_{X}(g(Y,Z))&=&g\left(\overset{\circ}{\nabla}_X Y,Z \right) + g\left(Y, \overset{\circ}{\nabla}_X Z\right),\nonumber\\
        && \forall X,Y,Z \in \Gamma(TM).
    \eea

    From now on, we will drop the quantifiers, and implicitly understand that the equations we write hold for all vector fields $X,Y,Z$. Since $g(Y,Z)$ is a smooth function, the left hand sides of equations \eqref{firsteqn} \eqref{secondeqn} agree, thus
    \begin{equation}
        0=g(\nabla_X Y,Z) + g(Y,\nabla_X Z) - g\left(\overset{\circ}{\nabla}_X Y,Z \right) -g\left(Y, \overset{\circ}{\nabla}_X Z\right).
    \end{equation}
    Using $C^{\infty}$-multilinearity of the metric, one obtains
\begin{equation}
    0=g \left(\nabla_X Y - \overset{\circ}{\nabla}_X Y,Z \right) + g \left(Y,\nabla_X Z - \overset{\circ}{\nabla}_{X} Z \right).
\end{equation}
By the definition of $U$ from \ref{zerotheqn}, it follows that
\begin{equation}
    g(U(X,Y),Z)+g(Y,U(X,Z))=0,
\end{equation}
or
\begin{equation}\label{keepinmind}
    g(U(X,Y),Z)+g(U(X,Z),Y)=0.
\end{equation}
Similarly, from the definition of $U$, one has
\bea
%\begin{aligned}
    U(X,Y)- U(Y,X)&=&\nabla_X Y - \overset{\circ}{\nabla}_X Y - \nabla_Y X + \overset{\circ}{\nabla}_{Y} X\nonumber\\
    &=&\underbrace{\nabla_X Y - \nabla_Y X -[X,Y]}_{=T(X,Y)},
%    \end{aligned}
\eea

Since the Levi-Civita connection is torsion free, from which we find
\begin{equation}
    \begin{aligned}
        g(T(X,Y),Z)&=g(U(X,Y),Z)-g(U(Y,X),Z),\\
        g(T(Z,X),Y)&=g(U(Z,X),Y)-g(U(X,Z),Y),\\
        g(T(Z,Y),X)&=g(U(Z,Y),X)-g(U(Y,Z),X).
    \end{aligned}
\end{equation}
Adding the three equations, and  keeping in mind \eqref{keepinmind}, we find
\begin{equation}
\begin{aligned}
    g(T(X,Y),Z)&+g(T(Z,X),Y)\\
    &+g(T(Z,Y),X)=2g(U(X,Y),Z).
\end{aligned}
\end{equation}
By non-degeneracy of $g$, one can read off $U$, i.e.
\begin{equation}\label{asd5}
    U(X,Y)=\frac{1}{2} \left(T(X,Y)+T'(X,Y)+T'(Y,X) \right),
\end{equation}
where $T'$ is a type $(1,2)$ tensor defined as
\begin{equation}\label{tprime}
    g(T(Z,X),Y)=g(T'(X,Y),Z).
\end{equation}
By the definition of $U$ c.f. \eqref{zerotheqn}, we obtain:
\begin{equation}
    \frac{1}{2} \left(T(X,Y)+T'(X,Y)+T'(Y,X) \right)=\nabla_X Y - \overset{\circ}{\nabla}_{X} Y,
\end{equation}
from which expressing $\nabla_X Y$ leads to:
\begin{equation} \label{asd}
    \nabla_X Y=\overset{\circ}{\nabla}_{X} Y +\frac{1}{2} \left(T(X,Y)+T'(X,Y)+T'(Y,X) \right).
\end{equation}
By assumption, $\nabla$ is semi-symmetric, which upon substitution in \eqref{tprime} yields
\begin{equation}
    g(\pi(X) Z - \pi(Z) X,Y)=g(T'(X,Y),Z),
\end{equation}
or by linearity
\begin{equation}\label{asd4}
    g(\pi(X)Z,Y)-g(\pi(Z)X,Y)=g(T'(X,Y),Z).
\end{equation}
By using $C^{\infty}$ bilinearity and symmetry of the metric, it follows that
\begin{equation}\label{asd2}
\begin{aligned}
    g(\pi(X)Z,Y)&=\pi(X) g(Z,Y)\\
    &=\pi(X) g(Y,Z)=g(\pi(X)Y,Z).
\end{aligned}
\end{equation}

We similarly have for the second term, using the definition of the dual vector field $P$
\begin{equation}\label{asd3}
   \begin{aligned} g(\pi(Z)X,Y)&=g(g(Z,P)X,Y)=g(Z,P)g(X,Y)\\
   &=g(P,Z)g(X,Y)=g(g(X,Y)P,Z).
\end{aligned}
\end{equation}

By substituting \eqref{asd2} and \eqref{asd3} into \eqref{asd4}, we obtain
\begin{equation}
    g(\pi(X)Y,Z)-g(g(X,Y)P,Z)=g(T'(X,Y),Z),
\end{equation}
from which by non-degeneracy of the metric
\begin{equation}
    T'(X,Y)=\pi(X)Y-g(X,Y)P.
\end{equation}
Plugging in the torsion $T$ and the obtained $T'$ into \eqref{asd5} tells us that $U$ is given by
\begin{equation}
\begin{aligned}
    U(X,Y)&=\frac{1}{2}\left(\pi(Y)X-\pi(X)Y +\pi(X)Y \right)\\
    &+\frac{1}{2} \left(-g(X,Y)P+\pi(Y)X-g(Y,X)P\right),
\end{aligned}
\end{equation}
or upon simplification
\begin{equation}
    U(X,Y)=\pi(Y)X - g(X,Y)P.
\end{equation}
Substituting in the above result into \eqref{zerotheqn} leads to
\begin{equation}
    \pi(Y)X - g(X,Y)P=\nabla_{X}Y -\overset{\circ}{\nabla}_{X} Y,
\end{equation}
which is equivalent to the desired result
\begin{equation}\label{yanoresult}
 \nabla_{X}Y=\overset{\circ}{\nabla}_{X} Y + \pi(Y) X - g(X,Y)P.
\end{equation}
\end{proof}

We present a direct corollary, which we heavily use in the main text of the article.
\begin{corollary}
    In a local coordinate system, the Christoffel symbols of a semi-symmetric metric connection take the form \eqref{Christoffelsemisymmetric}
   \begin{equation}
     \tensor{{\Gamma}}{^\mu _\nu _\rho}= \tensor{\gamma}{^\mu _\nu _\rho} - \pi^{\mu} g_{\rho \nu} + \pi_{\nu} \delta^{\mu}_{\rho}.
\end{equation}
\end{corollary}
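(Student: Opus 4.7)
\bigskip

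\noindent\textbf{Proof proposal.} The plan is to specialize Yano's coordinate-free formula from Theorem~\ref{theoremyano} to the coordinate frame $\{\partial_\mu\}$ and read off the Christoffel symbols. Concretely, I would start from
\begin{equation}
 \nabla_{X}Y=\overset{\circ}{\nabla}_{X} Y + \pi(Y)\,X - g(X,Y)\,P,
\end{equation}
evaluate both sides at $X=\partial_\nu$ and $Y=\partial_\rho$, and then extract the $\partial_\mu$-component by applying $dx^\mu$.

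The key step is matching the local-coordinate conventions used throughout the paper. Recall from the torsion computation in Section~\ref{basicconcepts} that the author writes $\nabla_{\partial_\nu}\partial_\rho=\tensor{\Gamma}{^\beta_{\rho\nu}}\partial_\beta$ (so that $\tensor{T}{^\mu_\nu_\rho}=\tensor{\Gamma}{^\mu_{\rho\nu}}-\tensor{\Gamma}{^\mu_{\nu\rho}}=2\tensor{\Gamma}{^\mu_{[\rho\nu]}}$ as required). With this convention one has $\overset{\circ}{\nabla}_{\partial_\nu}\partial_\rho=\tensor{\gamma}{^\beta_{\rho\nu}}\partial_\beta$. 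Next I would compute the components of the two extra terms: $\pi(\partial_\rho)\partial_\nu=\pi_\rho\,\partial_\nu$ has $\mu$-component $\pi_\rho\delta^\mu_\nu$, while $g(\partial_\nu,\partial_\rho)P=g_{\nu\rho}\,P$ has $\mu$-component $g_{\nu\rho}\,\pi^\mu$ once we use $g(X,P)=\pi(X)$ to identify $P^\mu=\pi^\mu$.

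Assembling these pieces by applying $dx^\mu$ to Yano's identity gives
\begin{equation}
\tensor{\Gamma}{^\mu_{\rho\nu}}=\tensor{\gamma}{^\mu_{\rho\nu}}+\pi_\rho\delta^\mu_\nu-\pi^\mu g_{\nu\rho}.
\end{equation}
Since $\tensor{\gamma}{^\mu_{\rho\nu}}$ is symmetric in its lower indices and $g_{\nu\rho}=g_{\rho\nu}$, swapping the dummy names $\nu\leftrightarrow\rho$ yields the claimed formula \eqref{Christoffelsemisymmetric}. The only real obstacle is the bookkeeping around the paper's non-standard placement of the direction index in $\tensor{\Gamma}{^\mu_\nu_\rho}$; getting that right is what ensures the surviving Kronecker term reads $\pi_\nu\delta^\mu_\rho$ and not $\pi_\rho\delta^\mu_\nu$. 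Once the index conventions are aligned, no further computation is needed.
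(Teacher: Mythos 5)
Your proposal is correct and follows essentially the same route as the paper: specialize Yano's identity $\nabla_X Y=\overset{\circ}{\nabla}_X Y+\pi(Y)X-g(X,Y)P$ to coordinate vector fields, identify $P^\mu=\pi^\mu$ from $g(X,P)=\pi(X)$, and extract components with $dx^\mu$, taking care of the paper's convention $\nabla_{\partial_\nu}\partial_\rho=\tensor{\Gamma}{^\beta_{\rho\nu}}\partial_\beta$. The only cosmetic difference is that the paper picks $X=\partial_\rho$, $Y=\partial_\nu$ so the formula appears in the target index arrangement immediately, whereas you pick $X=\partial_\nu$, $Y=\partial_\rho$ and relabel the free indices at the end (a relabeling that is valid outright, without needing the symmetry of $\gamma$ or $g$).
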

\begin{proof}
    We choose $X=\partial_{\rho}, Y=\partial_{\nu}$. First, we compute the coordinates of the dual vector field. We start from the definition
    \begin{equation}
        g(\partial_{\rho},P^\mu \partial_{\mu})=\pi(\partial_{\rho}).
    \end{equation}
    It immediately follows that
    \begin{equation}
        P^{\mu} g_{\rho \mu}=\pi_{\rho} \iff P^{\mu} g^{\beta \rho} g_{\rho \mu}=\pi_{\rho} g^{\beta \rho},
    \end{equation}
    which means that the components of the dual vector field are the same as the components of the raised index $\pi$, as expected:
    \begin{equation}
        P^{\beta}=\pi^{\beta}.
    \end{equation}
    By substituting the local expressions in \eqref{yanoresult}, one obtains
    \begin{equation}
        \nabla_{\partial_\rho} \partial_{\nu}=\overset{\circ}{\nabla}_{\partial_\rho} \partial_{\nu} + \pi(\partial_\nu) \partial_\rho - g(\partial_\rho,\partial_\nu) \pi^{\sigma} \partial_{\sigma},
    \end{equation}

    From the definition of Christoffel symbols and components of a tensor field
    \begin{equation}
        \tensor{\Gamma}{^\beta_\nu _\rho} \partial_{\beta}= \tensor{\gamma}{^\beta _\nu _\rho} \partial_{\beta} + \pi_{\nu} \partial_{\rho} - g_{\rho \nu} \pi^{\sigma} \partial_{\sigma}
    \end{equation}

    To extract the Christoffel symbols, we apply a one-form $dx^{\mu}$ to the equation and use that it is a dual basis to the coordinate vector fields. This way, the desired result
    \begin{equation}
        \tensor{\Gamma}{^\mu _\nu _\rho}=\tensor{\gamma}{^\mu _\nu _\rho} + \pi_{\nu} \delta^{\mu}_{\rho} - g_{\rho \nu} \pi^{\mu}
    \end{equation}
    is obtained.
\end{proof}

Using very similar techniques to the proof presented in Theorem~\ref{theoremyano}, Kentaro Yano \cite{Kentaroyano} also proved a coordinate-free Theorem concerning the curvature tensor of a semi-symmetric metric connection. First of all, let us fix the convention for the Riemann curvature tensor.
\begin{definition}
    Let $(M,g)$ be a semi-Riemannian manifold with an affine connection $\nabla$. The Riemann curvature tensor $Riem$ is a $(1,3)$-tensor field defined as
    \begin{equation*}
        Riem(\omega,Z,X,Y)=\omega \left(\nabla_{X} \nabla_{Y}Z-\nabla_{Y} \nabla_{X} Z - \nabla_{[X,Y]}Z \right)
    \end{equation*}
\end{definition}
In this convention, the components in coordinates are given by the following proposition, whose proof is standard.
\begin{proposition}
    The components of $Riem$ in a local chart given by $\omega=dx^{\mu},Z=\partial_{\nu},X=\partial_{\rho},Y=\partial_{\sigma}$ are given by
   \begin{equation}
   \begin{aligned}
   \tensor{Riem}{^\mu _\nu _\rho _\sigma}&=\tensor{\Gamma}{^\lambda _\nu _\sigma} \tensor{\Gamma}{^\mu _\lambda _\rho} - \tensor{\Gamma}{^\lambda _\nu _\rho} \tensor{\Gamma}{^\mu_\lambda _\sigma}\\
   &+ \partial_{\rho} \tensor{\Gamma}{^\mu _\nu _\sigma} - \partial_{\sigma} \tensor{\Gamma}{^\mu _\nu _\rho}.
   \end{aligned}
   \end{equation}
\end{proposition}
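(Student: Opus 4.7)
The plan is to unwind the coordinate-free definition of $Riem$ by plugging in the coordinate basis fields and one-forms and then extracting components, taking care of the paper's index convention $\nabla_{\partial_\nu}\partial_\rho=\tensor{\Gamma}{^\beta_\rho_\nu}\partial_\beta$ that was established in the derivation of the torsion components.

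First I would substitute $\omega=dx^\mu$, $Z=\partial_\nu$, $X=\partial_\rho$, $Y=\partial_\sigma$ into
\begin{equation*}
Riem(\omega,Z,X,Y)=\omega\bigl(\nabla_X\nabla_Y Z-\nabla_Y\nabla_X Z-\nabla_{[X,Y]}Z\bigr),
\end{equation*}
and observe that because coordinate vector fields commute, $[\partial_\rho,\partial_\sigma]=0$, so the bracket term drops out immediately. This reduces the problem to computing the two iterated covariant derivatives and applying $dx^\mu$.

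Next I would compute each piece. Using the convention recalled above, $\nabla_{\partial_\sigma}\partial_\nu=\tensor{\Gamma}{^\lambda_\nu_\sigma}\partial_\lambda$, and then the Leibniz rule for the connection gives
\begin{equation*}
\nabla_{\partial_\rho}\nabla_{\partial_\sigma}\partial_\nu=\partial_\rho\tensor{\Gamma}{^\lambda_\nu_\sigma}\,\partial_\lambda+\tensor{\Gamma}{^\lambda_\nu_\sigma}\,\tensor{\Gamma}{^\mu_\lambda_\rho}\,\partial_\mu.
\end{equation*}
The completely analogous computation with $\rho$ and $\sigma$ exchanged yields the expression for $\nabla_{\partial_\sigma}\nabla_{\partial_\rho}\partial_\nu$. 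Subtracting and then applying $dx^\mu$, using $dx^\mu(\partial_\lambda)=\delta^\mu_\lambda$ and the $C^\infty(M)$-linearity of $dx^\mu$ (already exploited earlier in the paper), I would read off exactly
\begin{equation*}
\tensor{Riem}{^\mu_\nu_\rho_\sigma}=\tensor{\Gamma}{^\lambda_\nu_\sigma}\tensor{\Gamma}{^\mu_\lambda_\rho}-\tensor{\Gamma}{^\lambda_\nu_\rho}\tensor{\Gamma}{^\mu_\lambda_\sigma}+\partial_\rho\tensor{\Gamma}{^\mu_\nu_\sigma}-\partial_\sigma\tensor{\Gamma}{^\mu_\nu_\rho}.
\end{equation*}

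The calculation is entirely routine; the only subtlety, and the point most likely to go wrong, is keeping the order of the lower Christoffel indices consistent with the paper's convention (in which the contracted direction is written second, not first). I would flag this explicitly in the proof so that the reader can verify that the pattern $\tensor{\Gamma}{^\lambda_\nu_\sigma}\tensor{\Gamma}{^\mu_\lambda_\rho}$ indeed arises from $\nabla_{\partial_\rho}(\nabla_{\partial_\sigma}\partial_\nu)$ and not from the opposite ordering. Since no analytic input beyond the chain/Leibniz rule and linearity of $dx^\mu$ is required, no genuine obstacle is expected.
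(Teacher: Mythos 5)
Your computation is correct and is precisely the standard verification the paper invokes without writing out (the paper states the proposition ``whose proof is standard'' and omits the details): substitute the coordinate frame, drop the $\nabla_{[X,Y]}Z$ term since $[\partial_\rho,\partial_\sigma]=0$, expand the iterated derivatives by the Leibniz rule, and contract with $dx^\mu$. Your care with the index-ordering convention $\nabla_{\partial_\sigma}\partial_\nu=\tensor{\Gamma}{^\lambda_\nu_\sigma}\partial_\lambda$ (direction of differentiation written as the second lower index) is exactly the right point to flag, and your final expression matches the paper's formula.
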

We now quote Kentaro Yano's coordinate-free theorem, which gives the relation between the curvature tensors of the semi-symmetric connection and the Levi-civita connection. We do not write out the proof, as it is very similar to that one of Theorem ~\ref{theoremyano}.
\begin{theorem}[Yano]
    Let $(M,g,\nabla)$ be a semi-Riemannian manifold with a semi-symmetric connection that is metric-compatible and denote $\overset{\circ}{\nabla}$ the Levi-Civita connection. Moreover, denote by $Riem$ the curvature tensor of the semi-symmetric metric connection $\nabla$ and by $\overset{\circ}{Riem}$ the curvature tensor of the Levi-Civita connection $\overset{\circ}{\nabla}$. Then, the following equation
    \begin{equation}\label{coordinatefreecurvature}
    \begin{aligned}
        Riem(\omega,Z,X,Y)&=\overset{\circ}{Riem}(\omega,Z,X,Y)-\omega( S(Y,Z) X)\\
        &+\omega(S(X,Z)Y)
        - \omega(g(Y,Z) A(X))\\
        &+\omega(g(X,Z)A(Y)),
    \end{aligned}
    \end{equation}
   is satisfied for all one-forms $\omega$, and vector fields $X,Y,Z$, where
    \begin{equation}
        S(X,Y)=\left(\overset{\circ}{\nabla}_{X} \pi \right)(Y)- \pi(X) \pi(Y) + \frac{1}{2} \pi(P) g(X,Y)
    \end{equation}
    and $A$ is a $(1,1)-$tensor field defined by
    \begin{equation}
        g(A(X),Y)=S(X,Y).
    \end{equation}
\end{theorem}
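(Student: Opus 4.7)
My plan is to mimic the technique used in the proof of Theorem \ref{theoremyano}: exploit the fact that the difference of the two connections is a tensor field, and then expand the curvature of $\nabla$ in terms of $\overset{\circ}{\nabla}$ and this difference tensor. Concretely, by Theorem \ref{theoremyano} we may write
\begin{equation}
\nabla_{X} Y = \overset{\circ}{\nabla}_{X} Y + U(X,Y), \qquad U(X,Y) := \pi(Y) X - g(X,Y) P,
\end{equation}
so $U$ is a $(1,2)$-tensor. Starting from the coordinate-free definition $Riem(\omega,Z,X,Y) = \omega\bigl(\nabla_X \nabla_Y Z - \nabla_Y \nabla_X Z - \nabla_{[X,Y]}Z\bigr)$, I would substitute $\nabla = \overset{\circ}{\nabla} + U$ into each term and expand.

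The expansion of $\nabla_X\nabla_Y Z$ produces four summands: $\overset{\circ}{\nabla}_X\overset{\circ}{\nabla}_Y Z$, $U(X,\overset{\circ}{\nabla}_Y Z)$, $\overset{\circ}{\nabla}_X(U(Y,Z))$, and $U(X,U(Y,Z))$. Swapping $X\leftrightarrow Y$ and subtracting, plus using torsion-freeness of $\overset{\circ}{\nabla}$ (so that $U([X,Y],Z) = U(\overset{\circ}{\nabla}_X Y - \overset{\circ}{\nabla}_Y X, Z)$), one finds after cancellation that
\begin{equation}
Riem(\omega,Z,X,Y) = \overset{\circ}{Riem}(\omega,Z,X,Y) + \omega\bigl((\overset{\circ}{\nabla}_X U)(Y,Z) - (\overset{\circ}{\nabla}_Y U)(X,Z)\bigr) + \omega\bigl(U(X,U(Y,Z)) - U(Y,U(X,Z))\bigr).
\end{equation}
This is the standard identity for the difference of two curvature tensors and the core structural step of the proof.

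Next I would compute the two remaining contributions explicitly. For the covariant derivative piece, using metric compatibility of $\overset{\circ}{\nabla}$ (which makes the $X(g(Y,Z))$ terms collapse against $g(\overset{\circ}{\nabla}_X Y, Z) + g(Y, \overset{\circ}{\nabla}_X Z)$) and the identity $(\overset{\circ}{\nabla}_X \pi)(Z) = X(\pi(Z)) - \pi(\overset{\circ}{\nabla}_X Z)$, I expect to obtain
\begin{equation}
(\overset{\circ}{\nabla}_X U)(Y,Z) = (\overset{\circ}{\nabla}_X \pi)(Z)\, Y - g(Y,Z)\, \overset{\circ}{\nabla}_X P.
\end{equation}
For the quadratic piece, a direct algebraic expansion using $\pi(U(Y,Z)) = \pi(Z)\pi(Y) - g(Y,Z)\pi(P)$ and $g(X,U(Y,Z)) = \pi(Z)g(X,Y) - g(Y,Z)\pi(X)$ will produce terms in $X$, in $Y$, and in $P$; the symmetric $P$-contributions will cancel between the two orderings thanks to $g(X,Y) = g(Y,X)$.

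Finally, I would regroup the surviving terms by the trailing vector ($X$, $Y$, or $P$) and read off the formula. The terms containing $X$ should combine into $-S(Y,Z) X$ and those containing $Y$ into $+S(X,Z) Y$, while the terms with $P$ and $\overset{\circ}{\nabla} P$ should combine into $-g(Y,Z) A(X) + g(X,Z) A(Y)$ via the relation $g(\overset{\circ}{\nabla}_X P, Y) = (\overset{\circ}{\nabla}_X \pi)(Y)$. The main obstacle here is that the naïve grouping produces $\pi(P)$-coefficients that do not immediately match the definitions of $S$ and $A$; the reconciliation requires recognizing that the $\tfrac{1}{2}\pi(P)g(X,Y)$ term in $S$ is precisely what splits symmetrically between the $-S(Y,Z)X$ term and the $-g(Y,Z)A(X)$ term (and analogously with $X\leftrightarrow Y$). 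Verifying this cancellation is the only nontrivial bookkeeping step; once it is done, applying $\omega$ to the whole identity yields \eqref{coordinatefreecurvature}.
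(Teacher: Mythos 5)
Your proposal is correct and follows exactly the route the paper indicates: the paper omits this proof entirely, stating only that it is ``very similar'' to that of Theorem~\ref{theoremyano}, i.e.\ based on the difference tensor $U(X,Y)=\pi(Y)X-g(X,Y)P$ and the standard identity for the curvature of $\overset{\circ}{\nabla}+U$. I verified the bookkeeping you flagged as the one nontrivial step: with $A(X)=\overset{\circ}{\nabla}_X P-\pi(X)P+\tfrac{1}{2}\pi(P)X$ (which follows from $g(A(X),Y)=S(X,Y)$ and $(\overset{\circ}{\nabla}_X\pi)(Y)=g(\overset{\circ}{\nabla}_X P,Y)$), the two $\tfrac{1}{2}\pi(P)$ contributions coming from $-S(Y,Z)X$ and $-g(Y,Z)A(X)$ indeed add up to the single term $-\pi(P)\,g(Y,Z)\,X$ produced by $U(X,U(Y,Z))$, the $\pi(Z)g(X,Y)P$ terms cancel between the two orderings by symmetry of $g$, and every remaining term matches, so the regrouping closes exactly as you describe.
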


This result is relevant, because with the help of it, we can justify the form of the Riemann curvature tensor \eqref{riemanncurvaturesemisym} using the upcoming corollary.
\begin{corollary}\label{corollaryriemann}
    In a local coordinate system, the Riemann tensor of a semi-symmetric metric connection takes the form
    \begin{equation}
  \begin{aligned}
    \tensor{Riem}{^\mu_{\nu \rho \sigma}}=&\overset{\circ}{Riem} \tensor{}{^\mu _\nu _\rho _\sigma}- S_{\sigma \nu} \delta^{\mu}_{\rho}+S_{\rho \nu} \delta^{\mu}_{\sigma}\\
    &- g_{\sigma \nu} S_{\rho \lambda} g^{\lambda \mu}+g_{\rho \nu} S_{\sigma \lambda} g^{\lambda \mu}.
\end{aligned}
\end{equation}
\end{corollary}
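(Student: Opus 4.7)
The plan is to specialize Yano's coordinate-free curvature identity \eqref{coordinatefreecurvature} to coordinate basis vectors and covectors, translating each abstract object into its components. Specifically, I would substitute $\omega = dx^\mu$, $Z = \partial_\nu$, $X = \partial_\rho$, $Y = \partial_\sigma$. By the definition of the components of the Riemann tensor, the left-hand side then equals $\tensor{Riem}{^\mu_{\nu\rho\sigma}}$, and the first term on the right-hand side becomes $\overset{\circ}{Riem}\tensor{}{^\mu_\nu_\rho_\sigma}$. The task thus reduces to identifying the four correction terms term-by-term.

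For the two terms not involving $A$, I would use that $S(\partial_\sigma,\partial_\nu) = S_{\sigma\nu}$ is a scalar and can be pulled outside the one-form, so that
\begin{equation*}
-\omega\bigl(S(Y,Z)\,X\bigr) = -S_{\sigma\nu}\,dx^\mu(\partial_\rho) = -S_{\sigma\nu}\,\delta^\mu_\rho,
\end{equation*}
by duality of the coordinate bases. The same manipulation on the second correction yields $+S_{\rho\nu}\,\delta^\mu_\sigma$, matching the first pair of correction terms in the claimed formula.

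For the terms containing $A$, I would exploit its defining property $g(A(X),Y)=S(X,Y)$ to extract its components. Writing $A(\partial_\rho) = A^\mu{}_{\rho}\,\partial_\mu$ and pairing with $\partial_\lambda$ via the metric gives $A^\mu{}_{\rho}\,g_{\mu\lambda} = S_{\rho\lambda}$, so that $A^\mu{}_{\rho} = g^{\lambda\mu}\,S_{\rho\lambda}$. Then
\begin{equation*}
-\omega\bigl(g(Y,Z)\,A(X)\bigr) = -g_{\sigma\nu}\,dx^\mu\bigl(A(\partial_\rho)\bigr) = -g_{\sigma\nu}\,S_{\rho\lambda}\,g^{\lambda\mu},
\end{equation*}
and the analogous manipulation on the remaining term produces $+g_{\rho\nu}\,S_{\sigma\lambda}\,g^{\lambda\mu}$. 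Assembling all four corrections with the leading $\overset{\circ}{Riem}$ yields the stated identity.

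The calculation is essentially pure bookkeeping, so the main obstacle is only to keep index placement consistent, particularly for the implicit metric inversion that extracts the components of $A$; a second, minor check I would include for completeness is verifying that the abstract definition of $S$ in Yano's theorem reduces to \eqref{stensorpitensor} in components, using $(\overset{\circ}{\nabla}_{\partial_\nu}\pi)(\partial_\sigma) = \overset{\circ}{\nabla}_\nu \pi_\sigma$ together with $\pi(P) = g(P,P) = \pi^\lambda \pi_\lambda$ (the latter following from $P^\beta = \pi^\beta$, as established in the previous corollary).
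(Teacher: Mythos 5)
Your proposal is correct and follows essentially the same route as the paper's own proof: both specialize Yano's coordinate-free curvature identity to the coordinate basis $\omega=dx^\mu$, $Z=\partial_\nu$, $X=\partial_\rho$, $Y=\partial_\sigma$, extract the components of $A$ from its defining relation $g(A(X),Y)=S(X,Y)$ to get $A^\mu{}_\rho = S_{\rho\lambda}g^{\lambda\mu}$, and then use $C^\infty(M)$-multilinearity together with the dual-basis property to read off each term. The additional check that the abstract $S$ reduces to \eqref{stensorpitensor} is a sensible completeness remark but is not part of the paper's proof of this corollary.
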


\begin{proof}

    We choose a local coordinate system given by $\omega=dx^{\mu},Z=\partial_{\nu},X=\partial_{\rho},Y=\partial_{\sigma}$. Upon this choice, we compute the coordinates of the dual vector field $A$:
    \begin{equation}
    g\left(\left(A(\partial_{\nu})\right)^{\beta} \partial_{\beta},\partial_{\rho} \right)=S(\partial_{\nu}, \partial_{\rho}),
    \end{equation}
    \begin{equation}\label{partialresult}
        A_{\nu}^{\beta} g_{\beta \rho}=S_{\nu \rho} \iff A^{\mu}_{\nu}=S_{\nu \lambda} g^{\mu \lambda}.
    \end{equation}
 We now evaluate \eqref{coordinatefreecurvature}
    \bea
    &&R \left( dx^{\mu},\partial_{\nu},\partial_{\rho}, \partial_{\sigma} \right)= \overset{\circ}{R}\left( dx^{\mu},\partial_{\nu},\partial_{\rho}, \partial_{\sigma} \right) - dx^{\mu}( S(\partial_{\sigma},\partial_{\nu}) \partial_{\rho})\nonumber\\
       &&+ dx^{\mu}(S(\partial_{\rho},\partial_\nu) \partial_\sigma)- dx^{\mu}\left(g(\partial_{\sigma}, \partial_\nu) \left(A(\partial_{\rho}) \right)^{\beta} \partial_{\beta}\right)\nonumber\\
       &&+ dx^{\mu}\left( g(\partial_{\rho}, \partial_\nu) \left(A(\partial_\sigma) \right)^{\beta} \partial_{\beta} \right).
        \eea

    Using the definition of components, $C^{\infty}(M)$ multilinearity, and the notion of a dual basis, one obtains
    \bea
    \tensor{R}{^\mu _\nu _\rho _\sigma}=\overset{\circ}{R}\tensor{}{^\mu _\nu _\rho _\sigma} &-& S_{\sigma \nu} \delta^{\mu}_\rho + S_{\rho \nu} \delta^{\mu}_{\sigma}  \nonumber\\
        &-& g_{\sigma \nu}A_{\rho}^{\beta} \delta^{\mu}_{\beta} + g_{\rho \nu} A^{\beta}_{\sigma} \delta^{\mu}_{\beta}.
    \eea

    Substituting the partial result \eqref{partialresult} leads to the desired formula
    \bea
    \tensor{R}{^\mu _\nu _\rho _\sigma}=\overset{\circ}{R}\tensor{}{^\mu _\nu _\rho _\sigma} &-& S_{\sigma \nu} \delta^{\mu}_\rho + S_{\rho \nu} \delta^{\mu}_{\sigma}  \nonumber\\
        & -& g_{\sigma \nu} S_{\rho \lambda} g^{\mu \lambda} +g_{\rho \nu} S_{\sigma \lambda} g^{\mu \lambda}.
       \eea
\end{proof}

\end{appendix}


\begin{thebibliography}{99}


\bibitem{Einstein} A. Einstein, Die Feldgleichungen der Gravitation, Sitzungsberichte der Königlich Preussischen Akademie der Wissenschaften zur Berlin, 844 (1915).

\bibitem{Hilbert} D. Hilbert, Die Grundlagen der Physik, Nachrichten von der Gesellschaft der Wissenschaften zu Göttingen - Mathematisch- Physikalische Klasse \textbf{3}, 395 (1915).

\bibitem{Einstein2} A. Einstein, Die Grundlage der allgemeinen Relativit\"{a}tstheorie, Annalen der Physik \textbf{354}, 769 (1916).

\bibitem{Will} C. M. Will, The Confrontation between General Relativity and Experiment, Living Reviews in Relativity {\bf 17}, 4 (2014).

\bibitem{Gravwaves} B. P. Abbott et al. (LIGO Scientific Collaboration and Virgo Collaboration), Observation of Gravitational Waves from a Binary Black Hole Merger, Physical Review Letters {\bf 116} 061102 (2016).

\bibitem{Weyl} H. Weyl, Gravitation und Elektrizit\"{a}t, Sitzungsberichte der K\"{o}niglich Preussischen Akademie der Wissenschaften zu Berlin, 465 (1918).

\bibitem{Weyl1} H. Weyl, Space-time-matter, Dover, New York, 1952

\bibitem{Collectedpapers}  Schulman, R., Kox, A.J., Janssen, M., and Illy, J., eds., Einstein Collected Papers, Vol. 8A-8B: The Berlin Years: Correspondence, 1914–1918, (Princeton University Press, Princeton,
NJ, 1997), Volume \textbf{8B}, Document \textbf{544}, 767.

\bibitem{Cartan1} \'{E}. Cartan,  Sur une g\'{e}n\'{e}ralisation de la notion de courbure de Riemann et les espaces
\`{a} torsion, C. R. Acad. Sci. (Paris) \textbf{174}, 593 (1922).

\bibitem{Cartan2} \'{E}. Cartan,  Sur les vari\'{e}t\'{e}s \`{a} connexion affine et la theorie de la relativit\'{e} g\'{e}n\'{e}ralis\'{e}e, Ann. \'{E}c. Norm. Sup. \textbf{40}, 325 (1923).

\bibitem{Cartan3} \'{E}. Cartan, Sur les vari\'{e}t\'{e}s \`{a} connexion affine et la theorie de la relativit\'{e} g\'{e}n\'{e}ralis\'{e}e, Ann. \'{E}c. Norm. Sup. \textbf{41}, 1 (1924)

\bibitem{Cartan4} \'{E}. Cartan, Ann. \'{E}c. Norm. Sup., Sur les vari\'{e}t\'{e}s \`{a} connexion affine et la theorie de la relativit\'{e} g\'{e}n\'{e}ralis\'{e}e,  \textbf{42}, 17 (1925)

\bibitem{Hehl} F. W. Hehl,  P. von der Heyde, D. G. Kerlick, and J. M. Nester, General relativity with spin and torsion: Foundations and prospects, Reviews of Modern Physics  {\bf 48}, 393 (1976).

    \bibitem{Hehl1} V. De Falco, and E. Battista, Analytical results for binary dynamics at the first post-Newtonian order in Einstein-Cartan theory with the Weyssenhoff fluid, Phys. Rev. D {\bf 108}, 064032 (2023).

\bibitem{Hehl2} M. Piani and J. Rubio, Preheating in Einstein-Cartan Higgs Inflation: Oscillon formation, Journal of Cosmology and Astroparticle Physics {\bf 2023}, 002 (2023).

\bibitem{Hehl3} W. Barker and S. Zell, Einstein-Proca theory from the Einstein-Cartan formulation, Phys. Rev. D {\bf 109}, 024007 (2024).

\bibitem{WC1} H.-H. von Borzeszkowski and H.-J. Treder, The Weyl-Cartan Space Problem in Purely Affine Theory, Gen. Rel. Grav. {\bf 29},  455 (1997).

\bibitem{WC2} D. Puetzfeld and R. Tresguerres, A cosmological model in Weyl-Cartan spacetime, Class. Quant.
Grav. {\bf 18},  677 (2001).

\bibitem{WC3} D. Putzfeld, A cosmological model in Weyl-Cartan spacetime: I. Field equations and solutions, Class. Quant. Grav. {\bf 19},  3263 (2002).

\bibitem{WC4} M. Novello and S. E. Perez Bergliaffa, Bouncing cosmologies, Physics Reports {\bf 463} 127  (2008).

\bibitem{WC5} J. Attard, J. François, and S. Lazzarini, Weyl gravity and Cartan geometry, Phys. Rev. D {\bf 93}, 085032 (2016).

\bibitem{WC6} T. Harko, N. Myrzakulov, R. Myrzakulov, and S. Shahidi, Non-minimal geometry-matter couplings in Weyl-Cartan space-times: $f\left(R,T,Q,T_m\right)$ gravity, Phys. Dark Universe {\bf 4}, 100886 (2021).

\bibitem{WC7} S. Bahamonde and J. G. Valcarcel, Algebraic classification of the gravitational field in Weyl-Cartan space-times, Phys. Rev. D {\bf 108} 044037 (2023).

\bibitem{WC8} J. Wang, L.-X. Qiang, Y.-F. Zhao, Q.-Y. Yin, and X.-Y. Chen, Constraints on cosmological model in Weyl-Cartan spacetime from astronomical measurements, Physica Scripta {\bf 98}, 115034 (2023).

\bibitem{Weitz} R. Weitzenb\"{o}ck, Invariantentheorie, Noordhoff, Groningen (1923).
\bibitem{Ein} A. Einstein, Riemanngeometrie mit Aufrechterhaltung des Begriffes des Fern-Parallelismus, Preussische Akademie der Wissenschaften, Phys.-math. Klasse, Sitzungsberichte {\bf 1928} 217 (1928).

\bibitem{TP1} K. Hayashi and T. Shirafuji, New general relativity, Phys. Rev. D {\bf  19},  3524 (1979).
\bibitem{TP2} R. Ferraro and F. Fiorini, Modified teleparallel gravity: Inflation without an inflaton, Phys. Rev. D {\bf 75},  084031 (2007).

\bibitem{TP3}  C. G. Boehmer, T. Harko, and F. S. N. Lobo, Wormhole geometries in modified teleparallel gravity and the energy conditions, Phys. Rev. D {\bf 85},  044033 (2012).

\bibitem{TP4} Z. Haghani, T. Harko, H. R. Sepangi, S. Shahidi, Weyl-Cartan-Weitzenb\"{o}ck gravity as a generalization of teleparallel gravity, JCAP {\bf 10},  061 (2012).

\bibitem{TP5} Z. Haghani, T. Harko, H. R. Sepangi, and S. Shahidi, Weyl-Cartan-Weitzenb\"{o}ck gravity through Lagrange multiplier,
Phys. Rev. D {\bf 88},  044024 (2013).

\bibitem{TP6} T.  Harko, F. S. N. Lobo, G. Otalora, and E. N. Saridakis, $f(T,\cal{T})$ gravity and cosmology, Journal of Cosmology and Astroparticle Physics {\bf  2014}, 021 (2014).

 \bibitem{TP7} J. Levi Said, J. Mifsud, J. Sultana, and K. Zarb Adami, Reconstructing teleparallel gravity with cosmic structure growth and expansion rate data, Journal of Cosmology and Astroparticle Physics {\bf 2021}, 015 (2021).

\bibitem{TP8} M. Chakrabortty, N. Sk, and A. K. Sanyal, A viable form of the metric Teleparallel F(T) theory of gravity, Eur. Phys. J. C {\bf 83}, Issue 7, 557 (2023).

\bibitem{TP9} Y.-M. Hu, Y. Zhao, X. Ren, B. Wang, E. N. Saridakis, and Y.-F. Cai, The effective field theory approach to the strong coupling issue in $f(T)$ gravity, Journal of Cosmology and Astroparticle Physics {\bf 2023}, 060 (2023).

\bibitem{TP10} S. S. Mishra, S. Mandal, and P. K.  Sahoo, Constraining $f (T , \cal{T})$ gravity with gravitational baryogenesis, Phys. Lett. B {\bf 842},  137959 (2023).

\bibitem{TP11} J. C. N. de Araujo and H. G. M. Fortes, Compact stars in $f(T)=T+\xi T^\beta$ gravity, Eur. Phys. J. C {\bf 83}, 1168 (2023).

\bibitem{TPRev} S. Bahamonde et al., Teleparallel gravity: from theory to cosmology, Reports on Progress in Physics {\bf 86}, 026901 (2023).

\bibitem{Schrod} E. Schr\"{o}dinger, Space-Time Structure (Cambridge Science Classics). Cambridge: Cambridge University Press,
1985

\bibitem{Edd} A. S. Eddington, The Mathematical Theory of Relativity, Cambridge Univdersity Press, Cambridge, 1923

\bibitem{unified} H. F.M. Goenner, On the History of Unified Field Theories, Living Reviews in Relativity {\bf  7},  (2004)

\bibitem{Planck}  N. Aghanim et al. Planck Collaboration, Planck 2018 results
VI. Cosmological parameters, Astronomy and
Astrophysics {\bf 641}, A6 (2020).

\bibitem{Baryon1} K. S. Dawson et al., The Baryon Oscillation Spectroscopic Survey of SDSS-III, Astron. J. \textbf{145}, 10 (2013).

\bibitem{Baryon2} K. S. Dawson et al., The SDSS-IV Extended Baryon Oscillation Spectroscopic Survey: Overview and Early Data, Astron. J. \textbf{151}, 44 (2016).

\bibitem{Baryon3}  M. Gatti et al., Dark Energy Survey Year 3 Results: clustering redshifts - calibration of the weak lensing source redshift distributions with redMaGiC and BOSS/eBOSS, Monthly Notices of the Royal Astronomical Society \textbf{510}, 1223 (2022).

 \bibitem{acceleration} D. H. Weinberg, M. J. Mortonson, D. J. Eisenstein, C.
Hirata, A. G. Riess, and E. Rozo, Observational probes of cosmic acceleration, Physics Reports \textbf{530},
87 (2013).

\bibitem{rotationcurves1} P. Salucci, C. Frigerio Martins, and A. Lapi, DMAW 2010 LEGACY the Presentation Review: Dark Matter in Galaxies with its Explanatory Notes,
arXiv:1102.1184 (2011).

\bibitem{rotationcurves2} M. Persic, P. Salucci, and F. Stel, The universal rotation curve of spiral galaxies — I. The dark matter connection, Mon. Not. R. Astron.
Soc. \textbf{281}, 27 (1996).

\bibitem{rotationcurves3} A. Boriello and P. Salucci, The dark matter distribution in disc galaxies, Mon. Not. R. Astron. Soc.
\textbf{323}, 285 (2001).

\bibitem{darkmattercandidate1}  J. M. Overduin and P. S. Wesson, Dark matter and background light, Phys. Repts. \textbf{402},
267 (2004).

\bibitem{darkmattercandidate2} L. Bian, X. Liu, and K.-P. Xie, Probing superheavy dark matter with gravitational waves, Journal of High Energy
Physics \textbf{2021}, 175 (2021).

\bibitem{fQ1}  J.M. Nester and  H-J Yo, Symmetric teleparallel general relativity, Chinese Journal of Physics {\bf 37}, 113 (1999).

\bibitem{fQ2}  J. B. Jimenez, L. Heisenberg, and T. Koivisto,  Coincident general relativity, Phys. Rev.
D {\bf 98}, 044048, (2018).

\bibitem{fQ3} T. Harko, T. S. Koivisto, F. S. N. Lobo, G. J. Olmo, and D. Rubiera-Garcia, Coupling matter in modified f(Q) gravity, Phys. Rev. D {\bf 98}, 084043, (2018).

\bibitem{fQ4} Y. Xu, G. Li, T. Harko, and S.-D. Liang, $f(Q, T)$ gravity, Eur. Phys. J. C {\bf 79}, 708 (2019).

\bibitem{fQ5}  Y. Xu, T. Harko, S. Shahidi, and S.-D. Liang, Weyl type $f(Q, T)$ gravity, and its cosmological implications, Eur. Phys. J. C {\bf 80}, 449 (2020).

\bibitem{fQ6} N. Frusciante, Signatures of $f(Q)$ gravity in cosmology, Phys. Rev. D {\bf 103}, 044021, (2021).

\bibitem{fQ7}  R. H. Lin and X. H. Zhai, Spherically symmetric configuration in $f(Q)$ gravity, Phys. Rev. D {\bf 103}, 124001, (2021).

\bibitem{fQ8} S. V. Lohakare, S. K. Maurya, Ksh. Newton Singh, B. Mishra, and A. Errehymy, Influence of three parameters on maximum mass and stability of strange star under linear $f(Q)$-action, Monthly Notices of the Royal Astronomical Society {\bf 526}, 3796 (2023).

\bibitem{fQ9} A. Mussatayeva, N. Myrzakulov, and M. Koussour, Cosmological constraints on dark energy in f(Q) gravity: A parametrized perspective, Physics of the Dark Universe {\bf 42}, article  101276 (2023).

\bibitem{fQ10} S. Mandal, S. Pradhan, P. K. Sahoo, and T. Harko, Cosmological observational constraints on the power law $f(Q)$ type modified gravity theory, Eur. Phys. J. C {\bf 83}, 1141 (2023).

\bibitem{fQ11} P. Bhar, Ksh. Newton Singh, S. K. Maurya, and M. Govender, A four parameters quark star in quadratic $f(Q)$ - action, Physics of the Dark Universe {\bf 43}, 101391 (2024).

\bibitem{Gh1}  D. M. Ghilencea, Gauging scale symmetry and inflation: Weyl versus Palatini gravity, Eur. Phys. J. C {\bf 81}, 510 (2021).

\bibitem{Gh2}  D. M. Ghilencea, Standard Model in Weyl conformal geometry, Eur. Phys. J. C {\bf 82}, 23 (2022).

\bibitem{Gh3}  D. M. Ghilencea, Non-metric geometry as the origin of mass in gauge theories of scale invariance, Eur. Phys. J. C {\bf 83}, 176 (2023).

\bibitem{Gh4} M. Weisswange, D. M. Ghilencea, and D. St\"{o}ckinger, Quantum scale invariance in gauge theories and applications to muon production,
Phys. Rev. D {\bf 107}, 085008 (2023).

\bibitem{Gh5} D. M. Ghilencea and C. T. Hill, Renormalization group for nonminimal $\phi ^2R$ couplings and gravitational contact interactions, Phys. Rev. D {\bf 107}, 085013 (2023).

\bibitem{Harkoweylastro} J.-Z. Yang, S. Shahidi, and T. Harko, Black hole solutions in the quadratic Weyl conformal geometric theory of gravity, Eur. Phys. J. C {\bf 82}, 1171 (2022).

\bibitem{Harkodarkmatter}
P. Burikham, T. Harko, K. Pimsamarn, and S. Shahidi, Dark matter as a Weyl geometric effect, Phys. Rev. D. \textbf{107}, 064008 (2023).

\bibitem{Harkodarkmattera} M. Craciun and T. Harko, Testing Weyl geometric gravity with the SPARC galactic rotation curves database, to be published in Physics of the Dark Universe (2024).

    \bibitem{Gh6} T. Harko and S. Shahidi, Coupling matter and curvature in Weyl geometry: conformally invariant $f\left(R,L_m\right)$ gravity, Eur. Phys. J. C {\bf 82}, 219 (2022).

\bibitem{Gh7} T. Harko and S. Shahidi, Palatini formulation of the conformally invariant $f\left(R,L_m\right)$ gravity theory, Eur. Phys. J. C {\bf 82}, 1003 (2022).

\bibitem{HarkoSchr}
 L. Ming, S.-D. Liang, H.-H. Zhang, and T. Harko, From the Weyl-Schr\"{o}dinger connection to the accelerating Universe: Extending Einstein's gravity via a length preserving nonmetricity,
 Phys. Rev. D. \textbf{109}, 024003 (2024).

 \bibitem{SilkeKlemm}
 S. Klemm and L. Ravera, Schr\"{o}dinger connection with selfdual nonmetricity vector in 2+1 dimensions, Phys. Lett. B {\bf 817}, 136291 (2021).

\bibitem{FriedmannSchouten} A. Friedmann and J. A. Schouten, \"{U}ber die Geometrie der halbsymmetrischen \"{u}bertragung, Math. Zeitschr. \textbf{21}, 211 (1924).

\bibitem{Kentaroyano}
K. Yano,   On Semi-Symmetric Metric Connection, Revue Roumaine de Math\'{e}matique Pures et Appliqu\'{e}es {\bf 15}, 1579 (1970).

\bibitem{zangiabadi}
E. Zangiabadi and Z. Nazari, Semi-Riemannian manifold with semi-symmetric connections, Journal of Geometry and Physics {\bf 169}, 104341 (2021).

\bibitem{Riccicalculusbook}
J.A. Schouten, Ricci Calculus, An Introduction to Tensor Analysis and Geometrical Applications, Springer-Verlag, Berlin-Göttingen-Heidelberg, 1954

\bibitem{unificationsemisym} Gh. Fasihi-Ramandi, Semi-Symmetric connection formalism for unification of gravity and electromagnetism,
Journal of Geometry and Physics {\bf  144}, 245 (2019)

\bibitem{Kranas} D. Kranas, C. G. Tsagas, J. D. Barrow, and D. Iosifidis, Friedmann-like universes with torsion, Eur. Phys. J. C {\bf 79},  341 (2019).

\bibitem{Ben1} D. Benisty, A. van de Venn, D. Vasak, J. Struckmeier, and H. Stoecker, Torsional dark energy, Int. J. Mod. Phys. D {\bf 31}, 2242013-109 (2022).

\bibitem{Ben2} D. Benisty, E. I. Guendelman, A. van de Venn, D. Vasak, J. Struckmeier, and H. Stoecker, The dark side of the torsion: Dark Energy from propagating torsion, Eur. Phys. J. C {\bf 82}, 264 (2022).

\bibitem{Luz} P. Luz and  J. P. S. Lemos, Relativistic cosmology and intrinsic spin of matter: Results and theorems in Einstein-Cartan theory, Phys. Rev. D {\bf 107},  084004 (2023). 

\bibitem{murgescu}
V. Murgescu, Espaces de Weyl a Torsion et Leurs Representations Conformes, Ann. Sci. Univ. Timisoara, 19,221-228 (1968).

\bibitem{unaluysal}
F. \"{U}nal and A. Uysal, Weyl Manifolds with Semi-Symmetric Connection, Math. Comput. Appl.  {\bf 10}, 351 (1970).

\bibitem{1g} Y. Akrami et al., Planck 2018 results. I. Overview and the
cosmological legacy of Planck, Astron. Astrophys. \textbf{641}, A1 (2020).

%\bibitem{1h} N. Aghanim et al., Planck 2018 results. VI. Cosmological
%parameters, Astron. Astrophys. \textbf{641}, A6 (2020).

\bibitem{Bou} A. Bouali, H. Chaudhary, T. Harko, F. S. N. Lobo, T. Ouali, and M. A. S. Pinto, Observational Constraints and Cosmological Implications of Scalar-Tensor $f(R,T)$ Gravity, MNRAS {\bf 526}, 4192 (2023).

\bibitem{Sahni} V. Sahni, A. Shafieloo and A. A. Starobinsky, Two new diagnostics of dark energy, Phys. Rev. D \textbf{78}, 103502 (2008).

\bibitem{Pol} M. Chevallier and D. Polarski, Accelerating Universes with Scaling Dark Matter, Int. J. Mod. Phys. D {\bf 10},
213 (2001).
\bibitem{Pol1}  E. V. Linder, Exploring the Expansion History of the Universe, Phys. Rev. Lett. {\bf 90}, 091301 (2003).

\bibitem{G3} J. Beltran Jimenez, L. Heisenberg, and T. S. Koivisto, The Geometrical Trinity of Gravity, Universe {\bf 5}, 173 (2019).

\bibitem{fRLm} T. Harko and F. S. N. Lobo, $f\left(R,L_m\right)$ gravity, Eur. Phys. J. C {\bf 70}, 373 (2010).

\bibitem{fRT} T. Harko, F. S. N. Lobo, S. Nojiri and S. D. Odintsov, $f(R,T)$ gravity, Phys.
Rev. \textbf{D 84}, 024020 (2011).

\bibitem{book} T. Harko and F. S. N. Lobo, Extensions of $f(R)$ Gravity:
Curvature-Matter Couplings and Hybrid Metric Palatini Theory, Cambridge
University Press, Cambridge, UK, 2018

\bibitem{Parker} L. Parker, Particle Creation in Expanding Universes, Phys. Rev. Lett. \textbf{21}, 562 (1968).

\bibitem{Parker1} L. Parker, Quantized Fields and Particle Creation in Expanding Universes. I, Phys. Rev. \textbf{183}, 1057 (1969).

\bibitem{Parker2} S. A. Fulling, L. Parker and B. L. Hu, Conformal energy-momentum tensor in curved spacetime: Adiabatic regularization and renormalization, Phys. Rev. \textbf{10}, 3905 (1974).

\bibitem{P-M} I. Prigogine, J. Geheniau, E. Gunzig, and P. Nardone, Thermodynamics of Cosmological Matter Creation,
Proceedings Of The National Academy Of Sciences \textbf{85}, 7428 (1988).

\bibitem{Lima} M. O. Calvao, J. A. S. Lima, and I. Waga, On the thermodynamics of matter creation in cosmology, Phys. Lett. A
\textbf{162}, 223 (1992).

\bibitem{Su} J. Su, T. Harko, and S.-D. Liang, Irreversible thermodynamic description of dark matter and radiation creation during inflationary reheating, Advances in High Energy Physics \textbf{2017}, 7650238 (2017).

\bibitem{Bar} J. A. S. Lima and I. P. Baranov, Gravitationally induced particle production: Thermodynamics and kinetic theory, Phys. Rev. D \textbf{90},
043515 (2014).

\end{thebibliography}
\end{document}